\DeclareMathOperator{\tw}{tw}
\DeclareMathOperator{\cptw}{cptw}
\DeclareMathOperator{\val}{val}
\title{Partitioning the Bags of a Tree Decomposition Into Cliques}
\author{Thomas Bläsius}{Karlsruhe Institute of Technology, Germany \and \url{http://scale.iti.kit.edu} }{thomas.blaesius@kit.edu}{https://orcid.org/0000-0003-2450-744X}{}
\author{Maximilian Katzmann}{Karlsruhe Institute of Technology, Germany}{maximilian.katzmann@kit.edu}{https://orcid.org/0000-0002-9302-5527}{}
\author{Marcus Wilhelm}{Karlsruhe Institute of Technology, Germany}{marcus.wilhelm@kit.edu}{https://orcid.org/0000-0002-4507-0622}{}
\authorrunning{T. Bläsius, M. Katzmann, M. Wilhelm} 
\keywords{treewidth, weighted treewidth, algorithm
  engineering, cliques, clustering, complex networks} 
\begin{document}

\maketitle

\begin{abstract}
  We consider a variant of treewidth that we call
  \emph{clique-partitioned treewidth} in which each bag is
  partitioned into cliques.  This is motivated by the
  recent development of FPT-algorithms based on similar
  parameters for various problems.  With this paper, we
  take a first step towards computing clique-partitioned
  tree decompositions.

  Our focus lies on the subproblem of computing clique
  partitions, i.e., for each bag of a given tree
  decomposition, we compute an optimal partition of the
  induced subgraph into cliques.  The goal here is to
  minimize the product of the clique sizes (plus 1).  We
  show that this problem is NP-hard.  We also describe four
  heuristic approaches as well as an exact branch-and-bound
  algorithm.  Our evaluation shows that the
  branch-and-bound solver is sufficiently efficient to
  serve as a good baseline.  Moreover, our heuristics yield
  solutions close to the optimum.  As a bonus, our
  algorithms allow us to compute first upper bounds for the
  clique-partitioned treewidth of real-world networks.  A
  comparison to traditional treewidth indicates that
  clique-partitioned treewidth is a promising parameter for
  graphs with high clustering.

\end{abstract}

\section{Introduction}

The treewidth is a measure for how treelike a graph is in
terms of its separators.  It is defined via a tree
decomposition, a collection of vertex separators
called \emph{bags} that are arranged in a tree structure.
The size of the largest bag determines the width of the
decomposition and the treewidth of a graph is the minimum
width over all tree decompositions.

The concept of treewidth has its origins in graph theory
with some deep structural
insights~\cite{Lovasz_2005,robertson_seymour_graphminors_iv}.
Additionally, there are algorithmic implications.
Intuitively speaking, the separators of a tree
decomposition split the graph into pieces that can be
solved independently except for minor dependencies at the
separators. This is often formalized using a dynamic
program over the tree decomposition, yielding an
FPT-algorithm (fixed-parameter tractable) with the
treewidth as parameter~\cite{Cygan2015}. As this is a
versatile framework that can be applied to many problems,
it comes to no surprise that there has been quite a bit of
effort to develop algorithms for computing low-width tree
decompositions (see,
e.g.,~\cite{pace_2016_report,pace_2017_report}).

A major obstruction for low treewidth are large cliques,
which inevitably lead to large separators. This is
particularly true for so-called complex networks, i.e.,
graphs with strong community structure and heterogeneous
degree distribution, which appear in various domains such
as communication networks, social networks, or
webgraphs. One could, however, hope for two aspects that
together mitigate this negative effect of large
cliques. First, though some separators need to be large,
these separators are structurally simple, e.g., they form a
clique or can be covered with few cliques. Second,
separators that are large but structurally simple still let
us solve the separated pieces individually with low
dependence between them. The first hope is supported by the
fact that the treewidth is asymptotically equal to the
clique number in hyperbolic random
graphs~\cite{blaesius_hyp_sep_treewidth}; a popular model
for complex networks~\cite{hyp_geom_compl_ntwrks}.  This
indicates that cliques are indeed the main obstruction for
low treewidth in these kinds of networks. The second hope
is supported by the results of de Berg et
al.~\cite{deBerg_framework_intersection}, who introduced
the concept $\mathcal{P}$-flattened tree
decompositions. There, the graph is partitioned into
cliques and the width of the tree decomposition is measured
in terms of the (weighted) number of cliques in a bag.
Thus, the width does measure the complexity of separators
rather than their size. Based on this definition, the
authors then show that these structurally simple separators
help to solve various graph problems efficiently.

To the best of our knowledge, these extended concepts have
not yet been studied from a practical perspective.  With
this paper, we want to initiate this line of research by
addressing two questions.  First, can such
clique-partitioned tree decompositions lead to
substantially smaller width values than classical tree
decompositions?  Second, how can such tree decompositions
be computed?  For the second question, we design and
evaluate different algorithmic strategies for computing a
novel yet closely related variant of tree decompositions.
Our experiments yield some interesting algorithmic insights
and provide a good starting point for further development.
On networks that do exhibit clique structures, the
constructed tree decompositions indeed have sufficiently
low width to answer the first question affirmatively.  We
believe that there is plenty of room for improvement in our
approaches, which may yield even better insights into the
applicability of the new parameter.  In the following, we
discuss related work before stating our contribution more
precisely.

\subsection{Related Work}

There are multiple lines of research that investigate
variants of treewidth where additional structural
properties are taken into account.  As mentioned above, De
Berg et al.~\cite{deBerg_framework_intersection} propose a
variant of tree decompositions where the initial graph is
partitioned into cliques (or unions of constantly many
connected cliques) that are contracted into weighted
vertices.  The weight of a clique of size $s$ is
$\log(s+1)$ and the weight of a bag of the tree
decomposition is the sum of its weights.  Using this
technique, they give subexponential algorithms for a range
of problems on geometric intersection graphs, including
\textsc{Independent Set}, \textsc{Steiner Tree} and
\textsc{Feedback Vertex Set}.  For some of these problems,
the algorithms are also representation agnostic, while for
most others, the geometric representation is required.
They also prove that the running time of the algorithms is
tight under the exponential-time-hypothesis (ETH).
Kisfaludi-Bak~\cite{bak_hyperbolic_intersection} applied
the same algorithmic framework to intersection graphs of
constantly sized objects in the hyperbolic plane.

A similar parameter called \emph{tree clique width} has
been proposed by Aronis~\cite{aronis_tree_clique_width}.
Here, the idea is to consider tree decompositions where
each bag is annotated with an edge clique cover (ecc) and
where the size of the cover determines the width of a bag.
The paper shows several hardness results and adapts common
treewidth algorithms to the newly proposed parameter.

Another approach to capture graph structures that lead to
high treewidth despite being structurally simple has been
proposed by Dallard, Milanič, and Štorgel.  They define the
\emph{independence number} of a tree decomposition as the
size of the largest independent set of any of its bags and
the \emph{tree-independence number} of a graph as the
minimum independence number of any tree decomposition
\cite{dallard_treewidth_clique_ii}.  This parameter
connects to the more theoretical study of
$(\mathrm{tw}, ω)$-bounded graphs, i.e., graph classes
in which the treewidth depends only on the clique number~\cite{dallard_treewidth_clique_i,
  dallard_treewidth_clique_iii}.  This line of research is
mostly concerned with the classification and
characterization of the considered graph classes both in
terms of graph theory and algorithmic exploitability.
However, apart from a factor $8$ approximation with running
time $2^{O(k^{2})}\cdot n^{O(k)}$ due to Dallard, Fomin,
Golovach, Korhonen, and
Milanič~\cite{dallard_computing_tree_ind_num}, we are not
aware of any work that tries to actually build algorithms
for this or similar parameters.

\subsection{Contribution}

In this paper, we propose \emph{clique-partitioned}
treewidth as a parameter that captures structurally simple
separators in graphs.  It can be seen as a close adaptation
of $\mathcal{P}$-flattened
treewidth~\cite{deBerg_framework_intersection}, where we
\emph{first} compute a tree decomposition and \emph{then}
determine clique partitions of the subgraphs induced by the
bags.  Thus, instead of using a global clique partition of
the whole graph, we consider clique partitions that are
local to a single bag.

The remainder of this paper is structured as follows.  In
\cref{sec:cptw}, we formalize our definition for
clique-partitioned treewidth and prove several statements
comparing it with $\mathcal{P}$-flattened treewidth.  In
\cref{sec:clique-partition}, we present multiple approaches
to compute low-weight clique partitions for the bags of a
tree decomposition.  They include various heuristic
methods, as well as an exact branch-and-bound algorithm for
which we propose several adjustments with the potential to
improve its running time in practice.  Afterwards, in
\cref{sec:eval} we combine an implementation of our
approaches with existing methods for computing tree
decompositions and study the upper bounds on the
clique-partitioned treewidth of real-world networks.
Furthermore, we evaluate the performance of the exact and
heuristic clique partition solvers proposed in
\cref{sec:clique-partition}.

\section{Clique-partitioned treewidth}
\label{sec:cptw}

We first introduce some basic notation and give the
definition for traditional tree decompositions.  We write
$[n] = \{1, \dots, n\}$ for the first $n$ natural numbers.
Throughout the paper, we assume graphs $G = (V, E)$ to be
simple and undirected and write $V(G)$ and $E(G)$ for the
sets of vertices and edges, respectively.  For a subset
$X\subseteq V$ we write $G[X]$ for the subgraph of $G$
induced by $X$.

A \emph{tree decomposition} of $G$ is a pair $(T, B)$, for
a tree $T$ and a function $B$ mapping vertices of $T$ to
subsets of $V$ called \emph{bags} such that $T$
and $B$ have the following three properties: (1) every
vertex of $G$ is contained in some bag, (2) for every edge,
there is a bag containing both endpoints, and (3) for any
vertex $v$ of $G$, the set of bags containing $v$ forms a
connected subtree of $T$.  The \emph{width} of a tree
decomposition is the size of the largest bag minus 1.  The
\emph{treewidth} $\tw(G)$ is the smallest width obtainable
by any tree decomposition of $G$.

We define a \emph{clique-partitioned} tree decomposition of
$G$ as a tree decomposition where for every $t \in V(T)$ we
have a partition $\mathcal{P}_{t}$ of the subgraph induced
by the corresponding bag (i.e., the graph $G[B(t)]$) into
cliques.  Following de Berg et
al.~\cite{deBerg_framework_intersection}, we define the
\emph{weight} of a clique $C$ as $\log(|C| +1)$ and the
weight of a bag $B(t)$ as the sum of weights of the cliques
in its partition $\mathcal{P}_{t}$.  Throughout this paper
we assume $2$ to be the default base of logarithms.  The
weight of a clique-partitioned tree decomposition is the
maximum weight of any of its bags and the
\emph{clique-partitioned treewidth} (short: cp-treewidth)
of $G$, denoted by $\cptw(G)$, is the minimum weight of any
clique-partitioned tree decomposition.

As mentioned before, the clique-partitioned treewidth is
closely related to the parameter defined by de Berg et
al.~\cite{deBerg_framework_intersection}.  For a clique
partition $\mathcal{P}$ of the whole graph $G$, we say that
a \emph{$\mathcal{P}$-flattened tree decomposition} is a
clique-partitioned tree decomposition of $G$ where the
partition into cliques within a bag is induced by the
global partition $\mathcal{P}$.  As before, the weight of a
$\mathcal{P}$-flattened tree decomposition is the maximum
total weight of the cliques in any of its bags.  In
reference to the
authors~\cite{deBerg_framework_intersection}, we call the
minimum weight over all $\mathcal{P}$ the BBKMZ-treewidth.

We note that our parameter can also be seen as an
adaptation of \emph{tree clique
  width}~\cite{aronis_tree_clique_width}, where instead of
considering the size of an \emph{edge clique cover} of each
bag, we consider the logarithmically weighted sum of clique
sizes of a clique partition.  That is, we are using the
weight function of the $\mathcal{P}$-flattened treewidth to
define a parameter which considers individual clique
partitions, similar to tree clique width.

In the following, we compare the clique-partitioned
treewidth to the more closely related
BBKMZ-treewidth. First, as a global partition $\mathcal{P}$
can also be used locally in each bag of a
clique-partitioned tree decomposition, we obtain that the
clique-partitioned treewidth of a graph is at most its
BBKMZ-treewidth.  Additionally, the clique-partitioned
treewidth can also be substantially smaller than the
BBKMZ-treewidth, as shown in the following lemma.

\begin{lemma}\label{lem:parameter-local-log-global}
  There is an infinite family of graphs $\mathcal{G}$ such
  that a graph $G \in \mathcal{G}$ with $n$ vertices, has
  clique-partitioned treewidth in
  $\mathcal{O}\left( \log\log n \right)$ and
  BBKMZ-treewidth in
  $\Omega\left(\log n\right)$.
\end{lemma}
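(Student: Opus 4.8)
The plan is to build the family $\mathcal{G}$ from the ancestor--descendant comparability graphs of complete binary trees. For depth $d$, let $T_{d}$ be the complete binary tree with $2^{d+1}-1$ nodes and define $G_{d}$ on $V(G_{d})=V(T_{d})$ by joining two nodes exactly when one is an ancestor of the other. Then $G_{d}$ is trivially perfect (in particular chordal); its cliques are precisely the sets of pairwise comparable nodes, i.e.\ the subsets of a single root-to-leaf path (``chains''), and its maximal cliques are exactly the $2^{d}$ root-to-leaf paths, each of size $d+1$. Since $n:=|V(G_{d})|=2^{d+1}-1$ we have $d=\Theta(\log n)$, and I take $\mathcal{G}=\{G_{d}\}_{d\ge 1}$.

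For the upper bound on $\cptw$ I would exhibit the path decomposition whose bags are the $2^{d}$ root-to-leaf paths, placed along the path in the left-to-right (DFS) order of their leaves. Validity is immediate: every edge lies on a common root-to-leaf path, and for a node $v$ the bags containing $v$ are exactly the paths to leaves of the subtree rooted at $v$, which occupy a contiguous, hence connected, segment of the DFS order. As each bag is itself a clique of size $d+1$, the local partition consisting of that single clique has weight $\log(d+2)$, so $\cptw(G_{d})\le\log(d+2)=\mathcal{O}(\log\log n)$.

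The heart of the argument is the lower bound $\Omega(\log n)=\Omega(d)$ on the BBKMZ-treewidth, which must hold for \emph{every} global clique partition $\mathcal{P}$ and \emph{every} tree decomposition at once. Given $\mathcal{P}$, let $S$ be the set of ``tops'' (the shallowest node of each part) and, for a root-to-leaf path $P$, let $c(P)$ be the number of parts of $\mathcal{P}$ meeting $P$. Because each part is a chain, the top of any part meeting $P$ is an ancestor that already lies on $P$; this gives the clean identity $c(P)=|P\cap S|$ and shows that the $c(P)$ distinct parts meeting the maximal clique $P$ each contribute at least $\log 2=1$ to the weight of whatever bag contains $P$ (every clique sits inside some bag). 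Thus it suffices to prove $\max_{P}c(P)=\Omega(d)$, and the natural route is averaging: counting root-to-leaf paths through each top yields $\operatorname{avg}_{P}c(P)=\sum_{q\in S}2^{-\mathrm{depth}(q)}$.

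The main obstacle — and the reason a naive greedy ``take an unused child at every step'' argument does not suffice — is that global cliques may be \emph{non-contiguous} chains (such as $\{\mathrm{root},\mathrm{grandchild}\}$ skipping the node between), so one cannot simply follow fresh parts downward. I would instead use one robust constraint: each of the $2^{k}$ nodes at depth $k$ lies in a distinct part whose top has depth at most $k$, so $\sum_{j\le k}|S\cap\{\mathrm{depth}=j\}|\ge 2^{k}$ for every $k$. Feeding these cumulative bounds into $\sum_{q\in S}2^{-\mathrm{depth}(q)}$ by Abel summation gives $\operatorname{avg}_{P}c(P)\ge 1+d/2$, so some root-to-leaf path meets more than $d/2$ parts, its containing bag has weight more than $d/2$, and the BBKMZ-treewidth is $\Omega(d)=\Omega(\log n)$, completing the separation.
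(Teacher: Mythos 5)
Your proposal is correct, uses the same graph family (the ancestor--descendant closure of a complete binary tree) and the same path decomposition for the $\mathcal{O}(\log\log n)$ upper bound, but proves the $\Omega(\log n)$ lower bound by a genuinely different argument. The paper argues by induction on the height that for \emph{any} clique partition there is a root-to-leaf path all of whose vertices lie in pairwise distinct parts (the induction works because the part containing the current root is a chain, so its remaining elements sit in only one child subtree and one can descend into the other); that path is a maximal clique, hence contained in a bag of weight $h\cdot\log 2=\Omega(\log n)$. You instead run an averaging argument: identifying the parts meeting a root-to-leaf path $P$ with the tops in $P\cap S$, computing $\operatorname{avg}_P c(P)=\sum_{q\in S}2^{-\mathrm{depth}(q)}$, and feeding the constraint $\sum_{j\le k}\lvert S\cap\{\mathrm{depth}=j\}\rvert\ge 2^k$ (the $2^k$ pairwise incomparable nodes at depth $k$ lie in distinct parts with tops at depth at most $k$) through Abel summation to get $\operatorname{avg}_P c(P)\ge 1+d/2$. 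Your bound is quantitatively weaker (some path meets $\ge 1+d/2$ parts rather than all $d+1$ being distinct), but entirely sufficient for $\Omega(\log n)$, and it has the advantage of being fully explicit where the paper waves at a ``simple induction''; your observation that non-contiguous chains defeat a naive greedy descent is a real subtlety that the paper's inductive argument silently handles by always descending into the subtree avoiding the rest of the root's part. Both arguments correctly use the standard fact that every clique of $G$ is contained in some bag of any tree decomposition.
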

\begin{proof}
  The family $\mathcal G$ contains for every
  $h \in \mathbb N$ one graph $G_h$.  The Graph $G_{h}$ is
  a complete binary tree of height $h$, where additionally
  for every leaf $\ell$ we connect all $h$ vertices that
  lie on a path between the root $r$ and $\ell$ into a
  clique.  Note that we have $h \in \Theta(\log n)$.

  Let $\mathcal{P}_{h}$ be a clique partition of $G_{h}$.
  Then, via a simple induction over $h$, it is easy to see
  that in $G_{h}$ there is a path between the root $r$ and
  some leaf $\ell$ of $G_{h}$ such that every vertex on the
  path belongs to a different partition class.  These
  vertices form a clique in $G_{h}$ that has to be prosent
  in some bag of any $\mathcal{P}_{h}$-flattened tree
  decomposition of $G_{h}$.  This bag thus contains all $h$
  partition classes on the path and has weight
  $h \cdot \log(1+1) \in \Omega(\log n)$.

  At the same time we can construct a clique-partitioned
  tree decomposition $(T, \sigma)$, that has one bag for
  every path between the root $r$ and each leaf $\ell$.
  Then, $T$ forms a path.  As every bag consists of a
  single clique on $h$ vertices, there is a
  clique partition of this tree decomposition with weighted
  width $\log(h+1) \in O(\log \log n)$.
\end{proof}

Finally, we show the algorithmic usefulness of
clique-partitioned treewidth in the following lemma, which
is an extension of the one proposed by de Berg et
al.~\cite{deBerg_framework_intersection}.

\begin{lemma}
  \label{lem:fpt-independent-set}
  Let $G$ be a graph with a clique-partitioned tree
  decomposition $(T, σ)$ of weight~$\tau$.  Then a smallest
  independent set of $G$ can be found in
  $O(2^{\tau} \cdot \mathrm{poly}(n))$ time.
\end{lemma}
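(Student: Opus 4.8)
The plan is to run a dynamic program over the tree decomposition, mirroring the classical $O(2^{w}\cdot\mathrm{poly}(n))$ algorithm for \textsc{Maximum Independent Set} on a decomposition of width $w$, but with a state space that exploits the clique partition of each bag (we compute a maximum independent set; minimizing is trivial). First I would bring $(T,\sigma)$ into a nice form: the standard transformations turn any tree decomposition into a nice one, with \emph{leaf}, \emph{introduce}, \emph{forget}, and \emph{join} nodes, in polynomial time. Since every new bag is a subset of an original bag, restricting each clique partition $\mathcal{P}_{t}$ to the new bag keeps the cliques cliques and, because $\log(|C'|+1)\le\log(|C|+1)$ for $C'\subseteq C$, does not increase any bag's weight. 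Hence the resulting nice clique-partitioned decomposition still has weight at most $\tau$.

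The key observation, and the reason the logarithmic weight is the right measure, is that any independent set meets each clique in at most one vertex. So for a bag $B(t)$ with clique partition $\mathcal{P}_{t}=\{C_{1},\dots,C_{k}\}$, the only partial solutions I need to track are the \emph{transversal choices}: for each clique $C_{i}$, either take no vertex or pick exactly one of its $|C_{i}|$ vertices. This gives $\prod_{i=1}^{k}(|C_{i}|+1)$ states per bag, and since
\[
  \prod_{i=1}^{k}(|C_{i}|+1)=2^{\sum_{i}\log(|C_{i}|+1)}=2^{\weight(B(t))}\le 2^{\tau},
\]
the number of states per bag is at most $2^{\tau}$. For each state I would store the maximum size of an independent set of the subgraph induced by the vertices forgotten so far (together with the chosen bag vertices) that is consistent with the recorded transversal.

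The transitions are then routine. A leaf carries the single empty state; an introduce node extends each state by deciding whether the new vertex is taken, respecting adjacencies inside the bag and the clique it belongs to; a forget node takes, over all states agreeing on the remaining bag vertices, the best stored value; and a join node combines two compatible child states by adding their values and subtracting the count of shared bag vertices to correct for double counting. Each transition only compares or combines states that agree on the shared bag vertices, so it costs $O(2^{\tau}\cdot\mathrm{poly}(n))$ time per node; with $O(\mathrm{poly}(n))$ nodes the total is $O(2^{\tau}\cdot\mathrm{poly}(n))$. Reading off the root yields the optimum, and a witnessing set is recovered by standard backtracking.

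The main obstacle is not the transition bookkeeping, which is identical to the textbook treewidth dynamic program, but justifying that restricting to transversal states loses nothing and that the state count collapses to $2^{\tau}$. The first point is exactly the at-most-one-vertex-per-clique observation; the second is precisely the identity $\prod_{i}(|C_{i}|+1)=2^{\weight(B(t))}$, which is where de Berg et al.'s choice of weight $\log(|C|+1)$ pays off by turning a product of clique sizes into an exponential of the bag weight.
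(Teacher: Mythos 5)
Your proposal is correct and follows essentially the same route as the paper: a standard introduce/forget/join dynamic program over the tree decomposition in which the at-most-one-vertex-per-clique observation collapses the per-bag state space to $\prod_{C\in\mathcal{P}_t}(|C|+1)=2^{\tau}$. You are in fact somewhat more careful than the paper, e.g., in checking that restricting each clique partition to the bags of a nice decomposition does not increase the weight.
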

\begin{proof}
  We use a standard dynamic programming approach on tree
  decompositions based on \emph{introduce}, \emph{forget},
  and \emph{join} nodes (see for example Cygan
  et. al~\cite{Cygan2015}).  For each node $t \in V(T)$, we
  store a number of partial solutions for the subgraph of
  $G$ induced by the bags of nodes in the subtree below
  $t$.

  A partial solution consists of a subset of the vertices
  in the current bag as well as the size of the total
  partial independent set for the subgraph induced by the
  subtree below the current bag.  This makes it easy to
  initialize partial solutions for leaf nodes in the
  tree decomposition.

  In an introduce node, two new partial solutions are
  created, one where the new vertex is in the independent
  set and one where it is not.  In a forget node, the
  removed vertex is removed from each partial solution.  In
  a join node, the partial solutions from the child-nodes
  are combined by taking their union.

  In a traditional tree decomposition of width $k$, this
  leads to at most $2^{k}$ partial solutions per bag.
  In a clique-partitioned tree decomposition, this
  is even smaller, as there are only $k+1$ ways an
  independent set can intersect a clique of size $k$.
  Thus, assuming $\{\mathcal{P}_{t} \mid t \in V(T)\}$
  denotes the clique partition of weight $\tau$, the number
  of partial solutions that need to be considered per bag
  $t$ are at most
  \[
    \prod_{C \in \mathcal{P}_{t}} (|C| + 1) = 2^{\sum_{C \in \mathcal{P}_{t}} \log(|C| + 1)} = 2^{\tau}.
  \]
  As the number of bags and time spent per bag is
  polynomial, this concludes the proof.
\end{proof}

By the above argumentation, it follows that the
clique-partitioned treewidth introduced in this paper is
upper bounded by the version of de Berg et al. and can be
exponentially lower.  Additionally, it retains some power
in solving NP-hard problems in FPT-time.

\section{The weighted clique partition problem}
\label{sec:clique-partition}

We split the task of computing a clique-partitioned tree
decomposition in two phases.  First, we compute a tree
decomposition, minimizing the traditional tree width.
Secondly, fixing the structure and bags of this
decomposition, we compute a clique partition for every bag.
We note that we already lose optimality by this separation,
i.e., the result may be suboptimal even if we get optimal
solutions in each of the two phases.  However, we expect
that small bags should also allow for low-weight clique
partitions.

In the first phase, we use established algorithms for the
computation of tree decompositions.  Consequently, we focus
on the second step in this section.  To this end, we define
the \textsc{Weighted Clique Partition} problem, short
\textsc{Clique Partition}.  For a given graph $G$ and an
integer $w$, decide if there is a partition of $V(G)$ into
cliques $P_{1}, \dots, P_{k}$ such that
$\prod_{i \in [k]} (|P_{i}|+1) \le w$.  Note that this
function differs from the one in the definition of
clique-partitioned treewidth, but is equivalent, as
$\sum_{i \in [k]} \log(|P_i| + 1) = \log( \prod_{1\le i \le k} (|P_i|+1) )$
and the logarithm is monotonic.

In the following, we prove some technical lemmas that are
useful throughout the section, before showing that
\textsc{Weighted Clique Partition} is NP-complete
(Section~\ref{sec:clique-partition:hardness}).  Afterwards,
we give different heuristic approaches
(Section~\ref{sec:clique-partition:heuristic_approaches})
and an optimal branch-and-bound algorithm in
(Section~\ref{sec:clique-partition:exact}).  We start with
following lemma, which intuitively states that the weight
of a partition is smaller the more imbalanced the
individual weights are, i.e., moving a vertex from a
smaller to a larger clique reduces the total weight.

 \begin{lemma}
  \label{lem:weights_distribute}
  Let $a,b,c,d ∈ ℕ_{0}$ such that $a+b = c+d$ and $a \ge b$, $c \ge d$,
  $d > b$.  Then $(a+1)(b+1) < (c+1)(d+1)$.
\end{lemma}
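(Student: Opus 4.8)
The plan is to turn the inequality between the two products into an inequality between $ab$ and $cd$. First I would expand both sides of the desired inequality $(a+1)(b+1) < (c+1)(d+1)$. Writing $(a+1)(b+1) = ab + (a+b) + 1$ and similarly on the right, the constant terms cancel, and the linear terms contribute $(a+b)$ on the left and $(c+d)$ on the right; since $a+b = c+d$ by hypothesis, these cancel as well. Hence the claim is equivalent to the much simpler statement $ab < cd$.

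To establish $ab < cd$, I would set $S = a+b = c+d$ and eliminate $a$ and $c$ via $a = S-b$ and $c = S-d$. A short calculation then yields the factorization
\[
  cd - ab = (d-b)(a-d),
\]
so it suffices to show that both factors are positive. The first factor is positive directly from the hypothesis $d > b$.

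For the second factor, I would use the sum condition once more: from $a+b = c+d$ and $d > b$ we obtain $a - c = d - b > 0$, hence $a > c$; combined with $c \ge d$ this gives $a > d$ and thus $a - d > 0$. With both factors positive we conclude $cd - ab > 0$, i.e.\ $ab < cd$, which by the first step is exactly the statement to be proved.

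I do not expect a genuine obstacle here; the only point requiring a little care is the chain of deductions showing that the pair $(a,b)$ is strictly more spread out than $(c,d)$, which is what delivers $a > d$ and hence the sign of the second factor. (An alternative route would fix the sum $S$ and observe that $t \mapsto (t+1)(S-t+1)$ is a downward-opening parabola, hence strictly increasing for $t \le S/2$; since $b < d \le S/2$ this again gives the result, but the factorization above is the most direct.)
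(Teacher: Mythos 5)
Your proof is correct and essentially identical to the paper's: your factorization $cd - ab = (d-b)(a-d)$ is exactly the paper's term $x(a-b-x)$ with $x = d-b$, and both arguments reduce to showing these two factors are positive. Your route to the second factor's positivity (via $a > c \ge d$) is a touch cleaner than the paper's bound $x \le (a-b)/2$, but the substance is the same.
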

\begin{proof}
  There is an $x>0$ such that $c = a-x$ and $d = b + x$.
  As $c \ge d$, $x$ can be at most $(a-b)/2$.
  We derive
  \begin{align*}
    (c+1)(d+1) &= (a-x+1)(b+x+1)\\
    &= ab - bx + b + ax - x^{2} + x + a - x + 1\\
    &= (ab + a + b + 1) + ax - bx - x^{2}\\
    &= (a+1)(b+1) + x( a - b - x ).
  \end{align*}
  We have $x(a - b - x) > 0$, as $0 < x \le \frac{a-b}{2}$ and thus
  the claimed strict inequality follows.
\end{proof}

With the above lemma (i.e., repeated applications thereof)
we can compare the weight of two partitions.

\begin{lemma}\label{lem:distribute_repeat}
  Let $\langle s_1, \dots, s_k \rangle$ and
  $\langle r_1, \dots, r_\ell \rangle$ be different
  non-increasing sequences of natural numbers such that
  $2 \le k \le \ell$,
  $\sum_{i \in [k]} s_i = \sum_{i \in [\ell]} r_i$, and
  $s_i \ge r_i$ for all $i \in [k - 1]$.  Then
  $\prod_{i \in [k]} (s_i + 1) < \prod_{i \in [\ell]} (r_i
  + 1)$.
\end{lemma}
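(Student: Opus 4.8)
The plan is to recognize the statement as a \emph{majorization} inequality and to prove it by repeatedly invoking \cref{lem:weights_distribute}, which is exactly the two-clique ``spreading'' step in which the product of $(\cdot+1)$-terms strictly decreases. First I would make the two products comparable by padding the shorter sequence: set $s_{k+1} = \dots = s_\ell = 0$, so that $\langle s_1, \dots, s_\ell\rangle$ is still non-increasing and $\prod_{i\in[k]}(s_i+1) = \prod_{i\in[\ell]}(s_i+1)$, since each new factor equals $1$. After this reduction both products run over $\ell$ factors, and it suffices to show $\prod_{i\in[\ell]}(s_i+1) < \prod_{i\in[\ell]}(r_i+1)$.

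Next I would verify that the padded sequence $s$ \emph{majorizes} $r$, i.e., that the partial sums satisfy $S_m := \sum_{j\le m} s_j \ge \sum_{j\le m} r_j =: R_m$ for every $m \in [\ell]$, with equality at $m = \ell$. For $m \le k-1$ this is immediate from the hypothesis $s_i \ge r_i$. For $m \ge k$ the padding gives $S_m = S_\ell$, which equals the common total $\sum_j r_j = R_\ell \ge R_m$; equality at $m = \ell$ is the assumption of equal sums. Note that only coordinatewise domination on the first $k-1$ entries is needed here, the remaining inequalities being forced by the equal totals.

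The core step is then to transform $r$ into (padded) $s$ by a finite sequence of elementary transfers, each of which moves one unit from a smaller coordinate to a larger one and hence \emph{strictly decreases} the product by \cref{lem:weights_distribute}. Concretely, as long as $r \neq s$ there is a smallest index $p$ with $r_p < s_p$, and since the totals agree there is a later index $q > p$ with $r_q > s_q$; replacing the pair $(r_p, r_q)$ by $(r_p+1, r_q-1)$ spreads these two values further apart, so \cref{lem:weights_distribute} applied with $a = r_p+1$, $b = r_q-1$, $c = r_p$, $d = r_q$ shows the product strictly drops while all other factors are unchanged. Using the potential $\Phi = \sum_{m\in[\ell]} (S_m - R_m) \ge 0$, which decreases by at least $q-p \ge 1$ with each such transfer and vanishes exactly when $r = s$, the process terminates at $s$ after finitely many strictly product-decreasing steps. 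Since the padded $s$ and $r$ are different as sorted length-$\ell$ sequences --- if $k < \ell$ they differ in the last coordinate, and if $k = \ell$ this is the hypothesis --- we have $\Phi > 0$, so at least one transfer occurs and the inequality is strict.

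I expect the main obstacle to be the bookkeeping of the transfer step rather than the arithmetic, which \cref{lem:weights_distribute} already supplies: one must choose the coordinates $p$ and $q$ (decrementing at the end of a block of equal values and incrementing at the start of one) so that the sequence \emph{stays} non-increasing after the unit transfer and so that $s$ still majorizes the new sequence, which is precisely the classical Hardy--Littlewood--P\'olya transfer lemma. Handling ties correctly and confirming that $\Phi$ is a genuine monovariant under this order-preserving choice is where the care is needed.
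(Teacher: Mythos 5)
Your proof is correct and follows essentially the same route as the paper's: both transform $\langle r_1,\dots,r_\ell\rangle$ into $\langle s_1,\dots,s_k\rangle$ by repeated unit transfers, each justified by \cref{lem:weights_distribute} with exactly the substitution $a = r_p+1$, $b = r_q-1$, $c = r_p$, $d = r_q$. The only differences are cosmetic: the paper always decrements the last entry $r_\ell$ rather than a general index $q$ with $r_q > s_q$, and it leaves the termination and order-preservation bookkeeping implicit, which your zero-padding, majorization check, and potential function $\Phi$ make explicit.
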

\begin{proof}
  This follows from repeatedly applying
  Lemma~\ref{lem:weights_distribute} to go from
  $R = \langle r_1, \dots, r_\ell \rangle$ to
  $S = \langle s_1, \dots, s_k \rangle$ while reducing the
  product in each step.  To make this precise let $i$ be
  the first index where $s_i > r_i$.  We adjust $R$ by
  adding $1$ to $r_i$ ans subtracting $1$ from $r_\ell$.
  Note that this maintains the sum.  We apply
  Lemma~\ref{lem:weights_distribute} with $a = r_i + 1$,
  $b = r_\ell -1$, $c = r_i$, and $d = r_\ell$.  Then, we
  have $(a + 1)(b + 1) < (c + 1)(d + 1)$, i.e., the product
  of the adjusted sequence is smaller than that of the
  original sequence $R$.  Moreover, after a finite number
  of steps, we reach $S$ and thus the product for $S$ is
  smaller than the product for $R$.
\end{proof}

\subsection{Hardness}
\label{sec:clique-partition:hardness}

To prove that \textsc{Weighted Clique Partition} is
NP-complete, we perform a reduction in two steps.  We start
with the NP-hard problem \textsc{3-Coloring}.  It asks for
a given graph whether each vertex can be colored with one
of three colors such that no two neighbors have the same
color.  As an intermediate problem in the reduction, we
introduce \textsc{Weighted Independent Set Partition}.  It
is defined equivalently to \textsc{Weighted Clique
  Partition}, but instead of partitioning the graph into
cliques, we partition it into independent sets, i.e., sets
of pairwise non-adjacent vertices.  Note that independent
sets are cliques in the complement graph and vice versa.
Thus, \textsc{Weighted Independent Set Partition} and
\textsc{Weighted Clique Partition} are computationally
equivalent.  Thus, to obtain the following theorem, it
remains to reduce \textsc{3-Coloring} to \textsc{Weighted
  Independent Set Partition}.

\begin{theorem}\label{thm:np-complete}
  \textsc{Weighted Clique Partition} is NP-complete.
\end{theorem}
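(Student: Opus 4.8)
The plan is to establish NP-completeness in two parts: membership in NP and NP-hardness. Membership is routine: given a partition of $V(G)$ into sets $P_1, \dots, P_k$, we can verify in polynomial time that each $P_i$ induces a clique and that $\prod_{i \in [k]} (|P_i| + 1) \le w$, so the partition serves as a polynomial-size certificate. The substance of the theorem lies in hardness, and as the excerpt indicates, it suffices to reduce \textsc{3-Coloring} to \textsc{Weighted Independent Set Partition}, since that problem is computationally equivalent to \textsc{Weighted Clique Partition} via complementation.

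For the reduction, I would start from an instance $G$ of \textsc{3-Coloring} with $n = |V(G)|$ vertices. A proper $3$-coloring is exactly a partition of $V(G)$ into three independent sets (some possibly empty), so the difficulty is to force the weighted-independent-set-partition objective to ``prefer'' solutions using at most three parts. The natural idea is to pad the instance with gadget vertices that make a three-part partition cheap and any partition requiring a fourth nonempty independent set expensive. Concretely, I would augment $G$ by adding three large cliques (each of size roughly $N$ for a suitably chosen $N$), attaching each clique to force its vertices to spread across the three intended color classes, so that in the complement-independent-set view a balanced three-way split is strongly incentivized by \cref{lem:distribute_repeat}: that lemma tells us a more balanced partition into fewer parts yields a smaller product, so a valid $3$-coloring corresponds to the uniquely cheapest balanced three-part partition, while needing a fourth class forces an extra factor that pushes the product above the threshold $w$.

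The crux of the argument is choosing the gadget size $N$ and the threshold $w$ so that the two directions of the equivalence hold. In the forward direction, a proper $3$-coloring of $G$ yields a partition of the augmented vertex set into exactly three independent sets whose product of $(|P_i|+1)$ terms is at most $w$; here I must verify that the coloring distributes both original and gadget vertices so the sizes match the balanced optimum predicted by \cref{lem:distribute_repeat}. In the reverse direction, I would argue the contrapositive: if $G$ is not $3$-colorable, then every independent-set partition of the augmented graph uses at least four nonempty parts, and by \cref{lem:weights_distribute} and \cref{lem:distribute_repeat} any such partition has product strictly exceeding $w$. The main obstacle will be calibrating $N$ so that the ``extra part'' penalty dominates every possible rebalancing of a three-part solution, that is, ensuring no clever redistribution of gadget vertices among three independent sets can ever compensate for the infeasibility of the original coloring constraints; getting these inequalities to hold simultaneously with a polynomially bounded $N$ is the delicate part.

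Finally, I would note that the entire construction is computable in polynomial time in $n$ (the gadgets add only polynomially many vertices and edges, and $w$ is a number with polynomially many bits since it is a product of at most polynomially many factors each bounded by $N + 1$), which together with membership in NP completes the proof that \textsc{Weighted Clique Partition} is NP-complete.
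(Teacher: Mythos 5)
Your overall strategy matches the paper's: membership in NP is routine, and hardness goes via a reduction from \textsc{3-Coloring} to \textsc{Weighted Independent Set Partition} (equivalent to \textsc{Weighted Clique Partition} by complementation), using a padding gadget together with \cref{lem:weights_distribute,lem:distribute_repeat}. However, the gadget you sketch does not work as stated, and the reduction is left uncalibrated, so there is a genuine gap. You propose adding three cliques of size roughly $N$ to the graph that is to be partitioned into independent sets. But a clique of size $N$ forces its $N$ vertices into $N$ pairwise distinct independent sets, so for any $N > 3$ every partition of the augmented graph already has at least $N$ parts; the dichotomy ``three parts cheap, a fourth part expensive'' that your argument hinges on cannot arise. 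Relatedly, you read \cref{lem:distribute_repeat} as saying that a \emph{more balanced} partition has a smaller product; it says the opposite (cf.\ \cref{lem:weights_distribute}: moving a vertex from a smaller to a larger class strictly decreases the product). What you actually need is an upper bound on the size of \emph{every} independent set, so that ``as imbalanced as possible'' coincides with ``every part at the cap''.

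The paper's gadget achieves exactly this with constant-size attachments: for each original vertex $v$ it adds two private vertices $v_1, v_2$ forming a triangle with $v$ and having no other edges. Each triangle contributes at most one vertex to any independent set, so every independent set of the augmented graph $G'$ has at most $n$ vertices, while $G'$ remains $3$-colorable whenever $G$ is (each triangle takes one vertex of each color). With $|V(G')| = 3n$ and threshold $w = (n+1)^3$, a proper $3$-coloring of $G$ yields three classes of size exactly $n$ and weight exactly $(n+1)^3$; if $G$ is not $3$-colorable, then neither is $G'$, so any partition has at least four parts, each of size at most $n$, and \cref{lem:distribute_repeat} (comparing against the sequence $\langle n, n, n\rangle$) shows its weight strictly exceeds $(n+1)^3$. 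The ``delicate calibration of $N$'' you defer is therefore not a technicality to be filled in later but precisely where your construction fails; the fix is to replace the three global cliques by per-vertex triangles (or another device that caps independent-set sizes without inflating the number of parts any partition must use).
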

\begin{proof}
  Membership in NP is easy to see as polynomial time
  verification of a solution is straightforward.  For
  hardness, we reduce from \textsc{3-Coloring} to
  \textsc{Weighted Independent Set Partition}.  Thus, we
  now assume that we are given a graph $G$ and need to
  transform it into a graph $G'$ and integer $w$ such that
  such that $G$ can be colored with three colors if and
  only if $G'$ has a partition into independent sets of
  weight at most $w$.  We construct $G'$ as follows.  For
  every vertex $v$ of $G$, we add two new vertices $v_{1}$
  and $v_{2}$ that form a triangle together with $v$, but
  have no other edges.  We denote $n = |V(G)|$ and set
  $w = (n + 1)^3$.  Note that any independent set in $G'$
  can contain at most $n$ vertices, because every appended
  triangle admits only one independent vertex.
  
  Assume that $G$ admits a proper three-coloring.  This
  coloring directly translates to a three-coloring of $G'$
  as follows.  Every vertex $v$ of $G$ keeps its color in
  $G'$.  moreover, $v_1$ and $v_2$ each get one of the two
  other colors.  Thus, the coloring classes in $G'$ have
  size exactly $n$ each and form an independent set
  partition with weight $(n+1)^{3}$.

  If otherwise $G$ does not admit a proper three-coloring,
  then neither does $G'$ and there is no partition of $G'$
  into at most three independent sets.  Any partition of
  $V(G')$ into more than three independent sets has a
  weight larger than $(n+1)^{3}$ by
  \cref{lem:distribute_repeat}, as no independent set in
  $G'$ can have more than $n$ vertices.  Consequently, $G$
  is three-colorable if and only if there is a partition of
  $V(G')$ into independent sets with weight at most
  $(n+1)^{3}$.
\end{proof}

\subsection{Heuristic approaches}
\label{sec:clique-partition:heuristic_approaches}
We now explain different approaches to solving the
optimization variant of \textsc{Weighted Clique Partition}
both optimally and heuristically.

Throughout this section we make use of the fact that
enumerating all maximal cliques of a graph is not only
output polynomial~\cite{johnson_enumerating_cliques}, but
also highly feasible in practice as shown by Eppstein,
Löffler, and Strash~\cite{eppstein_cliques}.  We use an
implementation of their algorithm from the
\texttt{igraph}\footnote{\url{https://igraph.org/}}
library.

\subparagraph*{Maximal clique heuristic.}%
Recall from \cref{lem:weights_distribute} that the weight
function favors imbalanced clique sizes over more balanced
ones.  It therefore makes sense to try to find few large
cliques that cover all vertices.  A basic greedy heuristic
that tries to achieve this works as follows.  First, we
enumerate all maximal cliques $\mathcal{C}$ of the graph.
Then we iteratively add one clique to the partition by
greedily selecting the clique with the largest number of
remaining uncovered vertices.  We call this the
\emph{maximal clique heuristic}.

In order to efficiently implement this heuristic, we use a
priority queue to fetch the largest clique and keep track
of the cliques $\mathcal{C}_v \subseteq \mathcal{C}$ that a
vertex $v$ is part of.  This way, after choosing the
remaining vertices of a clique $C \in \mathcal{C}$ as a
partition, we have to update the sizes of
$\mathcal{O}(\sum_{v \in C} |\mathcal{C}_v|)$ cliques.  The
total number of such updates throughout the whole algorithm
is at most the sum of clique sizes in $\mathcal{C}$.  Thus,
using a Fibonacci Heap, a total running time of
$\mathcal{O}(|V| \log |\mathcal{C}| + \sum_{C\in
  \mathcal{C}} |C|)$ can be achieved.  In our
implementation we use a binary heap due to it being
faster in practice.  This costs an additional factor of
$\log |\mathcal C|$ for the second term.

\subparagraph*{Repeated maximal clique heuristic.}%
Note that the MC heuristic does not recompute the maximal
cliques of the remaining graph after selecting a clique.
As deleting the vertices of one clique can have the effect
that a non-maximal clique becomes maximal, the MC heuristic
might miss a clique we would want to select.  The
\emph{repeated maximal clique heuristic} recomputes
the set of maximal cliques after each decision, i.e., it
selects a maximum clique of the remaining graph in each
step.

\subparagraph*{Set Cover heuristics.}%
Observe that for the \textsc{Weighted Clique Partition}
problem, we have to choose a set of cliques of minimum
weight that cover all vertices.  Thus, we essentially have
to solve a weighted \textsc{Set Cover} problem.  As there
are reasonably efficient solvers for \textsc{Set Cover} (or
the equivalent \textsc{Hitting Set} problem), it seems like
a promising approach to use those.  However, this has the
disadvantage, that we would need to list all cliques and
not only the maximal cliques.  Nonetheless, it seems like a
good heuristic to just consider maximal cliques and find a
minimum set cover (unweighted or weighted).

The heuristic consists of two steps.  First, we compute a
minimum set cover, using the maximum cliques as sets and
the vertices as elements.  We consider two variants for
this steps; weighted (a set of size $k$ has weight
$\log(k + 1)$) and unweighted (each set has weight~$1$).
Afterwards, in the second step, we convert the cover into a
partition by assigning the overlap between selected cliques
to only one clique.  We call the resulting two approaches
the \emph{(maximal clique) set cover} and \emph{(maximal
  clique) weighted set cover} heuristics.

For the first step, i.e., solving \textsc{Set Cover}, we
use a state of the art branch-and-bound
solver~\cite{blaesius_hitting_set} for the unweighted case.
Additionally, for the weighted case, we use the
straight-forward formulation of set cover as an ILP and
solve it with Gurobi~\cite{gurobi}.  To the best of our
knowledge, ILP solvers are currently the state-of-the-art
for weighted set cover.

For the second step, we have to compute clique partitions
from the resulting set covers by assigning each vertex that
is covered by multiple cliques to a single one of these
cliques.  The goal is to minimize the weight of the
resulting cliques, i.e., by
Lemma~\ref{lem:weights_distribute}, we want to distribute
them as unevenly as possible.  We employ a simple greedy
heuristic, assigning each vertex to the largest clique it
is part of and braking ties arbitrarily in case of
ambiguity.

At a first glance it seems possible that doing both steps
optimally (solving set cover and resolving the overlaps)
could yield an overall optimal solution.  However, this is
not the case, as briefly discussed in
Appendix~\ref{sec:additional-stuff}.

\subsection{Exact branch-and-bound solver}%
\label{sec:clique-partition:exact}

Our branch-and-bound branches on which clique to select
next.  How to branch is described in
Section~\ref{sec:branching} where we show that we can, in
each step, select a maximal clique and that the cliques of
the resulting sequence are non-increasing in size.  In
Section~\ref{sec:size-lower-bound} and
Section~\ref{sec:valu-sequ-lower}, we describe lower bounds
for pruning the search space, i.e., if the best solution
found so far is better than the lower bound in the current
branch, we can prune that branch.

\subsubsection{Branching}
\label{sec:branching}

The following structural insight enables us to branch on
the maximal cliques.

\begin{lemma}
  \label{lem:optimum_largest_clique_maximal}
  Let $\mathcal{P}$ be a minimum weight clique partition of
  a graph $G$ and let $C \in \mathcal{P}$ be the largest
  clique of $\mathcal{P}$.  Then $C$ is maximal clique in
  $G$.
\end{lemma}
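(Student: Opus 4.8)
The plan is to argue by contradiction via an exchange argument powered by \cref{lem:weights_distribute}. Suppose $C$ is not maximal in $G$. Then there is a vertex $v \notin C$ adjacent to every vertex of $C$, so that $C \cup \{v\}$ is again a clique. Since $\mathcal{P}$ partitions $V(G)$ and $v \notin C$, the vertex $v$ lies in some other class $C' \in \mathcal{P}$ with $C' \neq C$. Because $C$ is the largest clique of $\mathcal{P}$, we have $|C| \ge |C'|$.

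I would then modify $\mathcal{P}$ into a new clique partition $\mathcal{P}'$ by moving $v$ from $C'$ into $C$, i.e., replacing $C$ by $C \cup \{v\}$ and $C'$ by $C' \setminus \{v\}$ (dropping the latter if it becomes empty). This is again a valid clique partition: $C \cup \{v\}$ is a clique by the choice of $v$, and removing a vertex from the clique $C'$ leaves a clique. Only the two affected classes change size, from $|C|, |C'|$ to $|C|+1, |C'|-1$, while their total is preserved, so $\mathcal{P}'$ differs from $\mathcal{P}$ only in these two classes.

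It remains to show that this strictly decreases the weight $\prod_{P \in \mathcal{P}} (|P| + 1)$, contradicting the minimality of $\mathcal{P}$. Since every other class is untouched, it suffices to compare the contribution of the two affected classes, and here I would apply \cref{lem:weights_distribute} with $a = |C|+1$, $b = |C'|-1$, $c = |C|$, and $d = |C'|$. The hypotheses $a+b = c+d$, $a \ge b$, $c \ge d$, and $d > b$ all follow from $|C| \ge |C'| \ge 1$, and the conclusion yields $(|C|+2)(|C'|) < (|C|+1)(|C'|+1)$; that is, the contribution of $\mathcal{P}'$ is strictly smaller. The only point requiring care is the degenerate case $|C'| = 1$, where $C' \setminus \{v\}$ vanishes and $b = 0$; but \cref{lem:weights_distribute} is stated over $\mathbb{N}_0$, so $b = 0$ is permitted and the same inequality applies (the empty class contributes the neutral factor $1$). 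I expect no genuine obstacle beyond checking these hypotheses and noting this boundary case, since the strict decrease then contradicts minimality and forces $C$ to be maximal.
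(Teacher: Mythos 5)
Your proposal is correct and follows essentially the same exchange argument as the paper: move the witness vertex $v$ from its class $C'$ into $C$ and invoke \cref{lem:weights_distribute} with exactly the same substitution to obtain $(|C|+2)|C'| < (|C|+1)(|C'|+1)$, contradicting minimality. Your explicit verification of the hypotheses and the $|C'|=1$ boundary case is slightly more careful than the paper's write-up but adds nothing substantively different.
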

\begin{proof}
  Assume that $C$ is a non-maximal clique.  That is, there
  is a vertex $v ∈ V(G) \setminus C$ with $C ⊆ N(v)$.  Let
  $C' ∈ \mathcal{P}$ be the clique containing $v$.  We
  construct a clique partition $\mathcal{P}'$ by removing
  $v$ from $C'$ and adding it to $C$.  As $C$ was the
  largest clique in $\mathcal{P}$, via
  \cref{lem:weights_distribute} we have
  $(|C|+2)(|C'|) < (|C| + 1) (|C'| + 1)$, contradicting the
  optimality of $\mathcal{P}$.
\end{proof}

Thus, even though not all cliques of an optimal solution
might be maximal, we at least know that the largest one is.
We can use the decision of which maximal clique to select
as the largest one as the branching decision of our
algorithm.  This way, we can solve the optimization variant
of \textsc{Weighted Clique Partition}, i.e., the algorithm
takes a graph $G$ and finds a minimum weight clique
partitioning.

After a clique $C$ has been selected as the largest one,
the remaining problem is to find a clique partition of
$G[V \setminus C]$ that does not use any clique larger
than $C$.  This means that we can view our algorithm as a
simple recursive subroutine that solves the same problem at
every node of the recursion tree.  As input it gets the
graph $G$ and the cliques $\langle C_1, \dots, C_i \rangle$
that have already been selected by previous recursive
calls.  It then tries to compute an optimal
clique partition of the remaining graph
$G' := G \setminus \bigcup_{j \in [i]} C_j$.  This is done
by either returning a trivial solution if $G'$ can be
covered with a single clique or by branching on the
decision of which maximal clique to select as the largest
one for the partition of $G'$.  Note that for this
decision, only maximal cliques that are at most as large as
any of the previously selected cliques
$\langle C_1, \dots, C_i \rangle$ need to be considered.
The result of the subroutine call is then the cheapest
solution found in any of the branches.

In order to quickly obtain a good upper bound, we explore
branches corresponding to larger cliques first.  This way,
the first leaf of the search tree constructs the same
solution as the repeated maximal clique heuristic.

\subsubsection{Size lower bound}
\label{sec:size-lower-bound}

We call the lower bound given by the following lemma the
\emph{size lower bound}.

\begin{lemma}\label{lem:size_lower_bound}
  Let $G$ be a graph with $n$ vertices and
  $\mathcal{P}$ be a clique partition
  of $G$ consisting of cliques of size at most $s$.  Then
  $\mathcal{P}$ has weight at least
  $(s+1)^{\lfloor n / s\rfloor} \cdot ((n \mod s) + 1)$.
\end{lemma}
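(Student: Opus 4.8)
The plan is to show that, among all clique partitions of $G$ whose parts have size at most $s$, the product $\prod_i(|P_i|+1)$ is minimized by the \emph{most imbalanced} admissible partition, and that this extremal partition has exactly the claimed weight. Concretely, let $S = \langle s_1, \dots, s_k\rangle$ be the non-increasing sequence consisting of $\lfloor n/s\rfloor$ copies of $s$, followed by a single entry equal to $n \bmod s$ whenever $n \bmod s \ne 0$. By construction $\prod_{i\in[k]}(s_i+1) = (s+1)^{\lfloor n/s\rfloor}\cdot((n\bmod s)+1)$, so it suffices to prove that every clique partition has weight at least $\prod_{i\in[k]}(s_i+1)$. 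Note that $S$ need not correspond to an actual partition of $G$; it is used purely as a sequence of natural numbers to which \cref{lem:distribute_repeat} can be applied.

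First I would fix an arbitrary clique partition $\mathcal{P} = \{P_1, \dots, P_\ell\}$ of $G$ with every $|P_i| \le s$, and let $R = \langle r_1, \dots, r_\ell\rangle$ be its part sizes written in non-increasing order, so that $r_i \ge 1$ and $\sum_{i} r_i = n = \sum_i s_i$. If $R = S$ the weight equals the bound and there is nothing to show, so assume $R \ne S$. The key observations are then that (i) $k = \lceil n/s\rceil \le \ell$, since $\ell$ parts each of size at most $s$ cannot cover more than $\ell s$ vertices, forcing $\ell \ge n/s$; and (ii) the first $k-1$ entries of $S$ are all equal to $s$, whence $s_i = s \ge r_i$ for every $i \in [k-1]$. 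Together with equal sums and both sequences being non-increasing, these are exactly the hypotheses of \cref{lem:distribute_repeat} (provided $k \ge 2$), which yields the strict inequality $\prod_{i\in[k]}(s_i+1) < \prod_{i\in[\ell]}(r_i+1)$, i.e., the weight of $\mathcal{P}$ strictly exceeds the bound.

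It then remains to treat the degenerate case $k = 1$, i.e., $n \le s$, which is excluded by the hypothesis $k \ge 2$ of \cref{lem:distribute_repeat}. Here the claimed bound simplifies to $n+1$, and the elementary inequality $\prod_{i}(r_i+1) \ge 1 + \sum_i r_i = n+1$ (obtained by expanding the product and discarding the nonnegative higher-order terms) finishes this case. Combining both cases gives weight at least $(s+1)^{\lfloor n/s\rfloor}\cdot((n\bmod s)+1)$, as desired.

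I expect the main obstacle to be bookkeeping rather than any deep idea: one must argue carefully that $S$ dominates every admissible $R$ on its first $k-1$ coordinates and that $k \le \ell$, and one must not forget the $k=1$ boundary case, since \cref{lem:distribute_repeat} is vacuous there. A secondary point worth stating explicitly is that $S$ serves only as a comparison sequence and need not be realizable as a clique partition of $G$.
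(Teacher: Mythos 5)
Your proof is correct and follows essentially the same route as the paper: compare the given partition's size sequence to the maximally imbalanced sequence with $\lfloor n/s\rfloor$ blocks of size $s$ plus a remainder, and invoke \cref{lem:distribute_repeat}. You are in fact more careful than the paper's two-sentence argument, which glosses over both the $n \le s$ boundary case (where \cref{lem:distribute_repeat} does not apply) and the fact that the comparison sequence need not be realizable as an actual clique partition of $G$.
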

\begin{proof}
  The stated minimum weight is achieved by a partitioning
  $\mathcal{P}'$ that uses as many cliques of size $s$ as
  possible and one clique with all remaining vertices.  Any
  other partitioning $\mathcal{P}$ using only cliques of
  size at most $s$ is at least as expensive, as it can be
  transformed into $\mathcal{P}'$ by of
  \cref{lem:distribute_repeat}.
\end{proof}

Note that the size lower bound can trivially be evaluated
in constant time.  Even though it is rather basic, we
expect this lower bound to be effective at pruning branches
in which very small cliques are selected early on.

\subsubsection{Valuable sequence lower bound}
\label{sec:valu-sequ-lower}

Note that the size lower bound optimistically assumes that
there are $\lfloor n / s \rfloor$ non-overlapping cliques
of size $s$.  This yields a bad lower bound if, e.g., there
is only one clique of size $s$ while all other cliques are
much smaller.  In the following, we describe an improved
bound based on this observation.  We note that we have to
be careful when considering what clique sizes are available
for the following reason.  Assume the branching has already
picked a clique of size $s$, i.e., subsequent selected
cliques have to have size at most $s$.  Then it seems
natural to derive a lower bound by summing over the sizes
of all maximal cliques of size at most $s$.  However, we
have to account for the fact that selecting (and deleting)
one clique can shrink a maximal clique that was larger than
$s$ to become a clique of size $s$.  Thus, there might me
more cliques of size $s$ available than initially thought.
In order to formalize this, we first introduce a different
problem that considers only sizes of the cliques without
making any assumptions on the overlap between the cliques.

In the \textsc{Valuable Sequence} problem, we are given a
multiset $A$ of natural numbers and a natural number $n$.
The task is to construct a sequence of total value $n$ and
minimum weight.  Such a sequence
$S = \langle s_1, s_2, \dots, s_k \rangle$ consists of
elements $s_i \in A$ such that each number is repeated at
most as often as it appears in $A$.  In the following, we
define value and weight of a sequence and give additional
restrictions to what constitutes a valid sequence.  To this
end, let $S_i = \langle s_1, \dots, s_i \rangle$ for
$i \le k$ denote a prefix of $S$.  We define a value
$\val(s_i)$ for each $s_i$ in the sequence as follows.  The
first element $s_1$ has value $\val(s_1) = s_1$.  For
subsequent elements $s_{i + 1}$, we have
$\val(s_{i + 1}) = \min\{s_{i + 1}, \val(s_i), n -
\val(S_i)\}$, where
$\val(S_i) = \sum_{j \in [i]} \val(s_j)$ is the total value
of the prefix $S_i$.\footnote{Note that $\val(s_{i + 1})$
  only depends on values of previous elements in $S$, i.e.,
  the definition is not cyclic.}  If $\val(s_i) = s_i$, we
say that the element contributes \emph{fully} to the
sequence.  Otherwise, it contributes \emph{partially}.  The
\emph{weight} of $S$ is
$\prod_{i \in [k]} (\val(s_i) + 1)$.  For the subsequence
$S_i$, we call the next element $s_{i+1}$ \emph{eligible}
if $s_{i + 1} - \val(S_i) \le \val(s_i)$; $s_1$ is always
eligible.  The sequence $S$ is \emph{valid} if each element
is eligible.

To make the connection back to \textsc{Weighted Clique
  Partition}, interpret the numbers in $A$ as the clique
sizes.  The total value $n$ corresponds to the number of
vertices that have to be covered.  The value $\val(s_i)$
corresponds to the number of vertices from the maximal
clique of size $s_i$ in $G$ that have not been covered by
previous cliques, i.e., the number of vertices that are
newly covered in step $i$.  Note that in step $i+1$, at
least $s_{i} - \val(S_i)$ new vertices are covered as only
$\val(S_i)$ have been covered previously.  Thus, the
eligibility requirement ensures that the number of vertices
covered in step $i + 1$ is not larger than the number of
vertices covered in step $i$ (recall, that we can assume
the chosen cliques to form a non-increasing sequence).
Moreover, for the definition of $\val(s_{i + 1})$, note
that the minimum with $\val(s_i)$ ensures that the sequence
of values is non-increasing and the minimum with
$n - \val(S_i)$ ensures that the total value is $n$.

The following two lemmas formalize this connection between
\textsc{Valuable Sequence} and \textsc{Weighted Clique
  Partition}.  Afterwards, we discuss how \textsc{Valuable
  Sequence} can be solved optimally.

\begin{lemma}\label{lemma_partition_mapping_injective}
  Let $\mathcal{P}$ be a minimum weight clique partition of
  a graph $G$ and let $\mathcal{C}$ be the set of maximal
  cliques in $G$.  Then, any mapping
  $f: \mathcal{P} \to \mathcal{C}$ with $P ⊆ f(P)$ for
  each $P∈\mathcal{P}$ is injective and there exists at
  least one such mapping.
\end{lemma}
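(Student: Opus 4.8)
The plan is to handle the two claims separately. For existence, I would observe that every set in a clique partition is itself a clique, and that any clique is contained in at least one maximal clique (extend it greedily by adding common neighbors until this is no longer possible). Hence, choosing for each $P \in \mathcal{P}$ an arbitrary maximal clique $f(P) \supseteq P$ yields a well-defined mapping of the required form, which establishes that at least one such mapping exists.

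For injectivity, I would argue by contradiction using the minimality of $\mathcal{P}$. Suppose some mapping $f$ with $P \subseteq f(P)$ is not injective, so there are distinct classes $P_1 \neq P_2 \in \mathcal{P}$ with $f(P_1) = f(P_2) = C$ for a common maximal clique $C$. Since $P_1 \subseteq C$ and $P_2 \subseteq C$ and $C$ is a clique, the union $P_1 \cup P_2$ is again a clique, as all its vertices are pairwise adjacent in $C$. Because $P_1$ and $P_2$ are disjoint classes of a partition, replacing them by the single class $P_1 \cup P_2$ produces a new clique partition $\mathcal{P}'$ of $G$.

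It then remains to compare the weights. The two classes contribute $(|P_1|+1)(|P_2|+1)$ to the product defining the weight of $\mathcal{P}$, whereas the merged class contributes $|P_1|+|P_2|+1$ to the weight of $\mathcal{P}'$, while all other factors are identical. Since $(|P_1|+1)(|P_2|+1) = (|P_1|+|P_2|+1) + |P_1|\,|P_2|$ and partition classes are non-empty (so $|P_1|,|P_2| \ge 1$ and hence $|P_1|\,|P_2| \ge 1$), the merged partition $\mathcal{P}'$ has strictly smaller weight. This contradicts the minimality of $\mathcal{P}$, so $f$ must be injective. Alternatively, the weight decrease follows from repeatedly applying Lemma~\ref{lem:weights_distribute} while moving the vertices of the smaller class one by one into the larger one.

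I expect the whole argument to be routine. The only genuine step is the observation that two partition classes mapping to the same maximal clique must both be subcliques of it, which is exactly what makes their union a clique and thereby enables the weight-reducing merge; the weight estimate itself is an elementary expansion and presents no obstacle.
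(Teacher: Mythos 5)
Your proof is correct and follows the same route as the paper: existence by extending each class to a maximal clique, and injectivity by merging two classes that map to the same maximal clique, contradicting minimality. The paper states the merge step without the explicit weight calculation, which you supply correctly (the product drops by the factor $|P_1|\,|P_2| \ge 1$), so your write-up is simply a more detailed version of the same argument.
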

\begin{proof}
  There are mappings from $\mathcal{P}$ to $\mathcal{C}$,
  because each clique $P ∈ \mathcal{P}$ is either a maximal
  clique or a subset of a larger maximal clique.  Assume
  that a mapping $f: \mathcal{P} \mapsto \mathcal{C}$ is
  not injective.  Then, there are two partition classes $P$
  and $P'$ that are mapped to the same clique
  $C ∈ \mathcal{C}$.  Thus, these partition
  classes could be merged, contradicting the
  optimality of $\mathcal{P}$.
\end{proof}

\begin{lemma}
  Let $G$ be a graph with maximal cliques
  $\mathcal{C} = \{C_1, \dots, C_k \}$ and let $(A, n)$
  with $A = \{|C_1|, \dots, |C_k|\}$ and $n = |V(G)|$ be an
  instance of \textsc{Valuable Sequence}.  The weight of a
  minimum solution of $(A, n)$ is a lower bound for the
  weight of every clique partition of $G$.
\end{lemma}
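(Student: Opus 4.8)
The plan is to prove the (superficially stronger) statement that the minimum weight of a \textsc{Valuable Sequence} for $(A,n)$ lower-bounds the weight of a \emph{minimum weight} clique partition $\mathcal{P}$ of $G$; since every clique partition has weight at least that of a minimum one, this immediately yields the bound for all partitions. So I fix a minimum weight partition $\mathcal{P} = \{P_1,\dots,P_m\}$. By \cref{lemma_partition_mapping_injective} there is an injective map $f\colon \mathcal{P}\to\mathcal{C}$ with $P\subseteq f(P)$, and I would use it to turn $\mathcal{P}$ into a sequence $S$ over $A$, then argue $\weight(S)\le \prod_{i}(|P_i|+1)$. Chaining this with the fact that the \textsc{Valuable Sequence} optimum is at most $\weight(S)$ gives the claim.

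For the construction let $s_j=|f(P_j)|\in A$ and let $S=\langle s_{(1)},\dots,s_{(m)}\rangle$ be these sizes sorted non-increasingly. Since $f$ is injective, each maximal clique, and hence each value of $A$, is used with the correct multiplicity, so $S$ is an admissible choice. First I would check that $S$ is \emph{valid}: as $s_{(1)}=\max_j s_j$ and $\val(s_{(1)})=s_{(1)}$, every later position satisfies $s_{(j)}\le s_{(1)}=\val(s_{(1)})\le \val(S_{j-1})$, so $s_{(j)}-\val(S_{j-1})\le 0\le \val(s_{(j-1)})$ and eligibility holds trivially. Sorting by maximal-clique size (rather than by $|P_j|$) is precisely what makes validity automatic here.

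The heart of the argument is comparing the value profile of $S$ with the size profile of $\mathcal{P}$. Because $S$ is non-increasing, as long as the budget term $n-\val(S_{i-1})$ is not yet binding, every element contributes fully, i.e.\ $\val(s_{(i)})=s_{(i)}$; let $t^\ast$ be the first index where this term becomes binding, so that $\val(S_{t^\ast})=n$ and all subsequent values are $0$. Sorting the class sizes non-increasingly as $|P_{(1)}|\ge\cdots\ge|P_{(m)}|$, the inequalities $|f(P_j)|\ge|P_j|$ yield $s_{(i)}\ge|P_{(i)}|$ for every $i$. Hence for $t<t^\ast$ we get $\val(S_t)=\sum_{i\le t}s_{(i)}\ge\sum_{i\le t}|P_{(i)}|$, while for $t\ge t^\ast$ we simply have $\val(S_t)=n\ge\sum_{i\le t}|P_{(i)}|$. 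Thus the value multiset of $S$ majorizes the size multiset of $\mathcal{P}$, and both sum to $n$. Discarding the trailing zeros of $S$ (each contributing a factor $1$), the two non-increasing sequences have equal total and the values dominate the sizes on every proper prefix, so \cref{lem:distribute_repeat} (i.e.\ repeated application of \cref{lem:weights_distribute}) gives $\weight(S)=\prod_i(\val(s_{(i)})+1)\le\prod_i(|P_i|+1)$, the weight of $\mathcal{P}$.

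I expect the main obstacle to be making the prefix-sum (majorization) comparison fully rigorous, in particular handling the transition from the ``full contribution'' regime to the budget-capped regime at $t^\ast$ and feeding the resulting length-mismatched sequences into \cref{lem:distribute_repeat} correctly: the size sequence has length $m$, the nonzero value sequence has length $t^\ast$, and one must also verify $t^\ast\le m$ (which follows since the values pointwise dominate the sizes, so $S$ reaches total $n$ no later than $\mathcal{P}$). The eligibility check and the existence of the injective map are comparatively routine, the former being exactly the payoff of sorting by maximal-clique size; the only mild subtlety elsewhere is the degenerate case $t^\ast=1$, where $G$ is a single clique, $S$ equals $\mathcal{P}$, and the bound holds with equality.
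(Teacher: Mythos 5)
Your proof is correct, and it follows the same overall strategy as the paper's: fix a minimum-weight partition $\mathcal{P}$, use the injective map of \cref{lemma_partition_mapping_injective} to assign each class $P_j$ a distinct maximal clique $f(P_j)$, feed the sizes $|f(P_j)|$ into \textsc{Valuable Sequence} as a witness, and compare weights via \cref{lem:distribute_repeat}. The genuine difference is the ordering of the witness sequence. The paper orders by the partition-class sizes $|P_1|\ge\dots\ge|P_m|$ and must then prove $\val(s_i)\ge|P_i|$ by induction, with a case distinction on whether $P_i$ is itself maximal (full contribution) or properly contained in $f(P_i)$ (partial contribution, capped by $\val(s_{i-1})$); eligibility there requires a short computation using the induction hypothesis. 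You instead sort by the maximal-clique sizes $|f(P_j)|$, which makes eligibility trivial (every later element is at most $s_{(1)}\le\val(S_{j-1})$) and collapses the value analysis to ``full contribution until the budget $n$ is exhausted at some $t^\ast\le m$, then truncate.'' This buys a cleaner argument: you avoid the paper's inductive claim and its reliance on choosing $C$ so that $P_{i+1}=C\setminus\bigcup_{j\le i}P_j$ (which in turn rests on each $P_i$ being maximal in the residual graph), needing only the existence of \emph{some} injective extension map. The small costs are that you must separately verify $t^\ast\le m$ and the prefix-domination $s_{(i)}\ge|P_{(i)}|$ (the standard fact that pointwise domination survives sorting), both of which you identify and which go through; the degenerate case $t^\ast=1$ is handled correctly as well.
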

\begin{proof}
  We now show that for a minimum clique partition
  $\mathcal{P}$ of $G$, we find a solution $S$ of $(A, n)$
  whose weight is at most the weight of $\mathcal{P}$.

  Let $P_1, \dots, P_{k'}$ be the cliques of $\mathcal{P}$
  sorted by size in decreasing order.  We can think of
  $\mathcal{P}$ as constructed iteratively in that order,
  so that each $P_i$ is a maximal clique in
  $G[V\setminus (\bigcup_{j \in [i - 1]} P_j)]$.

  We construct $S$ iteratively until $\val(S) = n$.  For
  each element $s_i$ in the sequence, we prove by induction
  that $\val(s_i) \ge |P_{i}|$ except for the last element.
  This then lets us use \cref{lem:distribute_repeat} to
  obtain that the weight of $S$ is at most the weight of
  $\mathcal P$.  For $i=1$, we simply choose
  $s_1 = |P_1| ∈ A$.  Since the first element
  always contributes fully, we have $\val(s_1) = |P_{1}|$.

  Assuming we constructed the sequence until $i$, we
  continue with step $i + 1$ as follows.  If $P_{i + 1}$ is
  one of the initial maximal cliques in $\mathcal{C}$, then
  we can simply choose $s_{i + 1} = |P_{i + 1}| \in A$.
  Note that $s_{i + 1}$ is eligible, as
  $s_{i + 1} = |P_{i + 1}| \le |P_i| \le \val(s_i)$, which
  in particular implies
  $s_{i + 1} - \val(S_i) \le \val(s_i)$.  In this case,
  $s_{i + 1}$ contributes fully, i.e.,
  $\val(s_{i + 1}) = |P_{i + 1}|$, which implies the claim.
  
  Otherwise, if $P_{i + 1}$ is not in $\mathcal{C}$, it is
  at least a subset of some clique $C \in \mathcal{C}$ such
  that $P_{i + 1} = C \setminus \bigcup_{j \in [i]} P_j$.
  We choose $s_{i + 1} = |C|$.  The eligibility of
  $s_{i + 1}$ follows from the facts that the cliques in
  $\mathcal P$ are ordered non-increasingly, i.e.
  $|P_{i}| \ge |P_{i + 1}|$, and that
  $\val(s_j) \ge |P_{j}|$ holds by induction for all
  $j < i + 1$:
  \begin{equation*}
    \val(s_i)
    \ge |P_{i}|
    \ge |P_{i + 1}|
    = \Big|C \setminus \bigcup_{j \in [i]} P_j\Big| 
    \ge |C| - \sum_{j \in [i]} |P_j| 
    \ge s_{i + 1} - \sum_{j \in [i]} \val(s_j)
    = s_{i + 1} - \val(S_i).
  \end{equation*}
  Note that $s_{i + 1}$ contributes partially, i.e.,
  $\val(s_{i + 1}) = \val(s_{i})$ unless this is the last
  item in $S$.  As we just argued, we have
  $\val(s_i) \ge |P_{i + 1}|$ and thus
  $\val(s_{i + 1}) \ge |P_{i + 1}|$, proving the claim.

  To conclude, observe that our construction of $S$
  implicitly defines a mapping from $\mathcal{P}$ to
  $\mathcal{C}$ as in
  \cref{lemma_partition_mapping_injective}.  As such a
  mapping is injective, no number in $A$ is chosen twice.
  Moreover as we have $\val(s_i) \ge |P_i|$ for $i < k$,
  but both sum to $n$, the weight of $\mathcal{P}$ is at
  least the weight of $S$ by~\cref{lem:distribute_repeat}.
\end{proof}

\textsc{Valuable Sequence} can be solved optimally with a
simple greedy algorithm.  We call the resulting lower bound
the \emph{valuable sequence} bound.

\begin{theorem}
  \label{thm:valu_sequ_greedy}
  An instance $(A, n)$ of \textsc{Valuable Sequence} can be
  solved in $O(|A| + n)$ time.
\end{theorem}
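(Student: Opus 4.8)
The plan is to show that the greedy algorithm which repeatedly appends the largest still-available element of $A$ is both correct and implementable in linear time. First I would analyze the structure of the sequence this greedy produces. Processing the elements of $A$ in non-increasing order $a_1 \ge a_2 \ge \dots$, I would prove by induction that, as long as the accumulated value has not yet reached $n$, every element contributes fully: since all previous contributions were full we have $\val(s_{j-1}) = a_{j-1} \ge a_j$, so the minimum in the definition of $\val(s_j)$ is never attained by the $\val(s_{j-1})$-term, and thus $\val(s_j) = \min\{a_j,\, n - \val(S_{j-1})\}$. Consequently the only partial element is the last one, capped by the $n - \val(S_{j-1})$ term, after which $\val(S) = n$ and we stop. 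I would also verify eligibility along the way: because $a_j \le a_{j-1} = \val(s_{j-1})$ and $\val(S_{j-1}) \ge 0$, we get $a_j - \val(S_{j-1}) \le a_j \le \val(s_{j-1})$, so every element the greedy picks is eligible and the produced sequence is valid. Its value sequence is the non-increasing $\langle a_1, \dots, a_{p-1}, r\rangle$ with $r = n - \sum_{j<p} a_j$.

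Next I would prove optimality by comparing the greedy sequence $G$ with an arbitrary valid sequence $V$ through \cref{lem:distribute_repeat}. Any value sequence is non-increasing (by the minimum with $\val(s_i)$) and satisfies $\val(s_i) \le \val(s_j) \le s_j$ for all $j \le i$; hence the $i$-th value of $V$ is bounded by the smallest of the $i$ elements it has consumed, which in turn is at most the $i$-th largest element $a_i$ of the whole multiset. For $i \le p - 1$ this gives exactly $v_i \le a_i = g_i$, i.e. the greedy sequence pointwise dominates $V$ on its first $p-1$ entries. A short prefix-sum argument ($\sum_{j \le i} v_j \le a_1 + \dots + a_i$ and $\sum_{j \le i} v_j \le n$, hence at most the greedy prefix sum) shows that greedy reaches total value $n$ in the fewest steps, so $p \le k$ where $k = |V|$. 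With equal sums $n$, both sequences non-increasing, $p \le k$, and domination on $[p-1]$, \cref{lem:distribute_repeat} yields $\prod_{i}(g_i + 1) \le \prod_i (v_i + 1)$, proving $G$ has minimum weight. The degenerate case $p = 1$ (a single element of value $n$ and weight $n + 1$) I would dispatch separately as trivially optimal.

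Finally I would bound the running time. Since a value never exceeds $n$, I may cap every element of $A$ at $n$ and sort $A$ in non-increasing order with counting sort over buckets $1, \dots, n$ in $O(|A| + n)$ time. A single left-to-right pass then accumulates values until the total reaches $n$, spending $O(1)$ per element, for an overall $O(|A| + n)$ running time.

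The main obstacle I anticipate is the optimality argument, and specifically the pointwise-domination bound $v_i \le a_i$: prefix-sum domination alone does not imply it, so the crux is the observation that $v_i$ is bounded by the minimum of the $i$ elements $V$ has used, together with the fact that the minimum of any $i$ elements of a multiset is at most its $i$-th largest element. Getting the index bookkeeping right so that the hypotheses of \cref{lem:distribute_repeat} (in particular $2 \le p \le k$ and domination only up to index $p-1$) are met is the part that needs the most care.
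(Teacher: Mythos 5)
Your proposal is correct and follows essentially the same route as the paper: the identical greedy (largest remaining element, which you show is always eligible), optimality via \cref{lem:distribute_repeat} through pointwise domination of the value sequences, and linear time via counting sort plus a single pass. Your optimality step is in fact a more careful elaboration of what the paper asserts in one sentence (that the greedy's value in round $j$ dominates any alternative), so there is no substantive difference in approach.
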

\begin{proof}
  We construct a solution $S = s_1, \dots, s_i$ by
  iteratively choosing an eligible and not yet chosen
  number $a \in A$ with maximum value, until the value of
  the sum reaches $n$.

  We note that this greedy strategy maximizes how many
  numbers in $A$ are eligible, as the corresponding upper
  bound $\val(S) + \val(s_i)$ decreases as slowly as
  possible.  The optimality of the produced sequence
  $S = s_1,\dots,s_i$ follows, again, via
  \cref{lem:distribute_repeat} as for $j \in [i]$, the
  value $\val(s_{j})$ is at least as large as the value of
  any other number that can be chosen in round $j$.

  Regarding the running time, the greedy strategy can be
  implemented by sorting the numbers in $A$ (in $O(|A|+n)$
  time) and keeping track of the largest unchosen number
  that is eligible and contributes fully, as well as the
  smallest unchosen number that can contribute partially
  (which is larger than the ones that can contribute
  fully).  Both of these values can be updated in constant
  time each time a number has been chosen.
\end{proof}

\subsubsection{Sufficient weight reduction}
\label{sec:suff-weight-redu}

To speed-up the computation of clique partitions for all
bags of a tree decomposition, we additionally apply the
following reduction rule.  In the \emph{sufficient weight}
reduction, we immediately accept the first solution that is
lighter or equally light as the largest weight of any of
the already considered bags.

\section{Evaluation}
\label{sec:eval}

With our evaluation, we aim to answer the following
questions.

\begin{enumerate}
  \item How do the different algorithms compare in regards
    to run time and quality?
  \item How do the algorithms scale?
  \item What is the impact of the lower bounds and the
    reduction rules on the performance of the exact
    branch-and-bound solver?
  \item How do different network properties influence the
    performance of the algorithms?
  \item How do the resulting upper bounds on the
    clique-partitioned treewidth compare to traditional
    treewidth?
\end{enumerate}

\subparagraph*{Experimental setup.}
\label{sec:eval:solvers:setup}

Our implementation is written in Python. The source code
along with all evaluation scripts and results is available
on our public GitHub
repository\footnote{\url{https://github.com/marcwil/cptw_code}}.
The experiments were run with Python 3.10.1 on a Gigabyte
R282-Z93 (rev. 100) server (2250MHz) with 1024GB DDR4
(3200MHz) memory.

For each input graph, we perform the following two steps.
First, we compute a tree decomposition using the heuristics
implemented in the HTD library~\cite{htd}.  Specifically,
we use the min-fill-in heuristic, which is known to provide
a good tradeoff between run time and solution
quality~\cite{DBLP:conf/icdt/ManiuSJ19}.  Secondly, we
solve the \textsc{Weighted Clique Partition} problem for
each bag of the tree decomposition using all algorithms
proposed in \cref{sec:clique-partition}.

We use a time limit of five minutes for the heuristic
computation of low-weight tree decompositions with the HTD
library. For the \textsc{Weighted Clique Partition} algorithms, we set a time
limit of three minutes per bag and five minutes in total.

To discern the different solvers from
\cref{sec:clique-partition:heuristic_approaches,sec:clique-partition:exact},
in our plots, we use the following abbreviations: branch
and bound solver (B\&B), maximal clique set cover heuristic
(SC), maximal clique weighted set cover heuristic (WSC),
maximal clique heuristic (MC), and repeated maximal clique
heuristic (RMC).

\subparagraph{Input instances.}

For the input, we use a large collection of real-world
networks as well as generated networks. For the latter, we
use geometric inhomogeneous random graphs
(GIRGs)~\cite{bringmann_girgs}, which resemble real-world
networks in regards to important properties and have been
shown to be well suited for the evaluation of
algorithms~\cite{blasius_external_validity}. GIRGs can be
generated efficiently~\cite{weyand_girgs} and allow to vary
the power-law exponent ($\mathrm{ple}$) of the degree
distribution controlling its heterogeneity, as well as a
parameter $\alpha$ controlling the locality by either
strengthening the influence of the geometry (high values of
$α$) or increasing the probability for random edges not
based on the geometry (low values of $α$). We mainly use
the following two datasets, where each graph has been
reduced to its largest connected component.
\begin{itemize}
  \item A collection of $\num{2967}$ real-world
        networks~\cite{blasius_3006_networks} that
        essentially consists of all networks with at most
        \SI{1}{M} edges from Network
        Repository~\cite{DBLP:conf/aaai/RossiA15};
        see~\cite{blasius_external_validity} for details.
  \item GIRGs with $n \in \{500, 5000, 50000\}$ vertices,
        expected average degree $10$, dimension $1$,
        $\mathrm{ple} \in \{2.1, 2.3, 2.5, 2.7, 2.9\}$, and
        $\alpha \in \{1.25, 2.5, 5, \infty\}$.  For each
        parameter configuration, we generate ten networks
        with different random seeds, to smooth out random
        variations.
\end{itemize}

\subsection{Performance comparison}
\label{sec:perf-comp}

Here, we evaluate the performance of our \textsc{Clique Partition} approaches on the two datasets.

\subparagraph*{Generated instances.}
\label{sec:eval:solvers:compare_relative}

\begin{figure}[]
  \centering
  \begin{subfigure}[b]{0.31\textwidth}
    \includegraphics[width=\textwidth]{./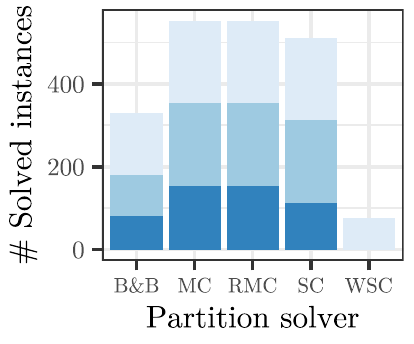}
    \caption{Number of solved instances with 500 (bright)
      \SI{5}{k} (medium) and \SI{50}{k} (dark) vertices.}
    \label{fig:compare_partition_solvers:timelimit}
  \end{subfigure}
  ~ 
  \begin{subfigure}[b]{0.31\textwidth}
    \includegraphics[width=\textwidth]{./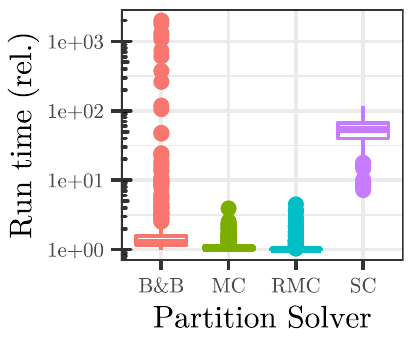}
    \caption{Distribution of run time relative to fastest
      solver within time limit on a given graph.}
    \label{fig:compare_partition_solvers:time}
  \end{subfigure}
  ~ 
  \begin{subfigure}[b]{0.31\textwidth}
    \includegraphics[width=\textwidth]{./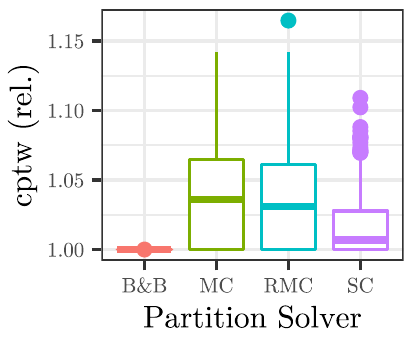}
    \caption{Distribution of obtained width relative to
      best found solution on a given graph.}
    \label{fig:compare_partition_solvers:quality}
  \end{subfigure}
  \caption{Comparison of run time and solution quality of
    the different exact (red), greedy (green, blue) and set cover
    based (violet) solvers for the \textsc{Weighted Clique Partition} problem
    on GIRGs.}%
  \label{fig:compare_partition_solvers}
\end{figure}

In \cref{fig:compare_partition_solvers}, we compare the run
times as well as the solution quality of the different
considered \textsc{Clique Partition} algorithms on the dataset of
generated networks.  In
\cref{fig:compare_partition_solvers:timelimit}, we show how
many of the 600 instances were solved within the time limit
by each solver.  While the greedy heuristics are able to
finish on almost all instances, the set cover heuristic and
the branch-and-bound solver get timed on some of the larger
networks with \SI{5}{k} and \SI{50}{k} vertices.  The
weighted set cover heuristic performs much worse, finishing
only on few networks.  We therefore exclude it from the
other comparisons.

In
\cref{fig:compare_partition_solvers:time,fig:compare_partition_solvers:quality},
we compare the performance for all instances that were
solved within the time limit by all other algorithms.
\cref{fig:compare_partition_solvers:time} shows the run
time of each algorithm relative to the fastest one on each
instance.  \cref{fig:compare_partition_solvers:quality}
shows the obtained upper bound on the cp-treewidth relative
to the optimal solution computed by the branch-and-bound
solver.

Our findings are as follows.  The branch-and-bound solver
solves the fewest instances of the four considered
algorithms, but is quick on most of the instances it is
able to solve within the time limit.  Both greedy
heuristics (MC and RMC) are similarly fast, significantly
outcompeting the other approaches.  In terms of quality,
all three heuristics perform well, achieving solutions
within few percent of the optimum.  The set cover heuristic
slightly outperforms the greedy heuristics in terms of
quality, but pays for this with substantially higher running
time.

\subparagraph*{Real-world networks.}
\begin{figure}[]
  \begin{minipage}[t][][b]{.45\linewidth}
    \centering
    \includegraphics[]{./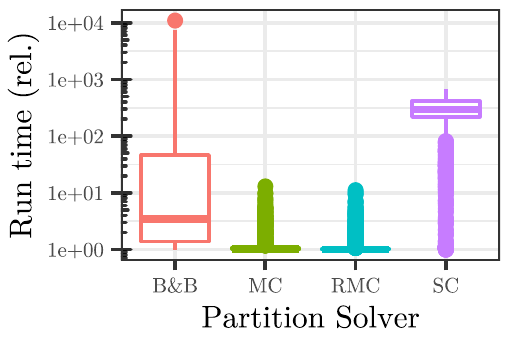}
    \caption
      {%
        Distribution of run time relative to fastest
      solver within time limit on our set of real-world networks.%
        \label{fig:compare_partition_solvers_rw:time}%
      }%
  \end{minipage}\hfill
  \begin{minipage}[t][][t]{.5\linewidth}
    \centering
    \captionof{table}
      {%
        Distribution of obtained cp-treewidth relative
    to optimum clique partition on our set of real-world networks.%
        \label{tbl_rw_rel_weight}%
      }
    \begin{tabular}{rllll}
      \hline
      \textbf{Measure} & \textbf{MC} & \textbf{RMC} & \textbf{SC} \\
      \hline
      Mean & 1.008 & 1.009 & 1.002 \\
      Median & 1 & 1 & 1 \\
      90th percentile & 1.035 & 1.036 & 1.000 \\
      99th percentile & 1.108 & 1.121 & 1.046 \\
      Maximum & 1.254 & 1.192 & 1.113 \\
      \hline
    \end{tabular}
  \end{minipage}
\end{figure}

We complement the above evaluation of our \textsc{Clique Partition}
algorithms, by comparing their performance on the collection
of real-world networks.  As above, we exclude the weighted
set cover heuristic.  The other four approaches were able
to finish on 1243 (B\&B), 2619 (MC), 2622 (RMC), and 2204
(SC) of the 2967 networks within the time limit.  We
compare our algorithms on the 1237 networks that were
solved by all four approaches.
\cref{fig:compare_partition_solvers_rw:time} shows the run
time of each solver relative to the fastest solver on each
instance.  In \cref{tbl_rw_rel_weight} we describe the
distribution of the obtained upper bounds on the
cp-treewidth relative to the optimal solution found by the
branch-and-bound solver.

Our results are the following.  In general, our
observations on generated networks are replicated on the
real-world networks.  Even though the branch-and-bound
algorithm solved fewer instances than the set cover
heuristic, it is comparatively faster on the networks it is
able to solve.  Both approaches are, however, considerably
slower than the greedy heuristics and this difference is
more pronounced than on the generated networks.  Regarding
the solution quality, all three heuristic solvers perform
even better than on the generated networks, with only a
tiny fraction of instances not being solved almost
optimally.

\subparagraph*{Discussion.} We find that the
proposed algorithms show good performance both on generated
and real-world instances.  Although, the branch-and-bound
solver was only able to solve about half of the considered
networks, it's run time typically beats the set cover
heuristic on the networks it can solve.  In addition, it is
a valuable tool for evaluating the solution quality of the
other approaches.  We find that especially the set cover
heuristic, but also the greedy heuristics (MC and RMC)
often find close to optimal clique partitions.  Due to
their excellent trade-off between speed and solution
quality, the greedy heuristics are probably the best
approach in most practical settings.  In general, we do not
expect that there is substantial room for improvement in
the engineering of \textsc{Clique Partition} solvers for the
computation of cp-treewidth.  Instead, in order to achieve
better upper bounds, we suggest future research to optimize
the tree decomposition and the partition into cliques at
the same time.

\subsection{Run time scaling}
\label{sec:run-time-scaling}

\begin{figure}
  \centering
  \includegraphics[]{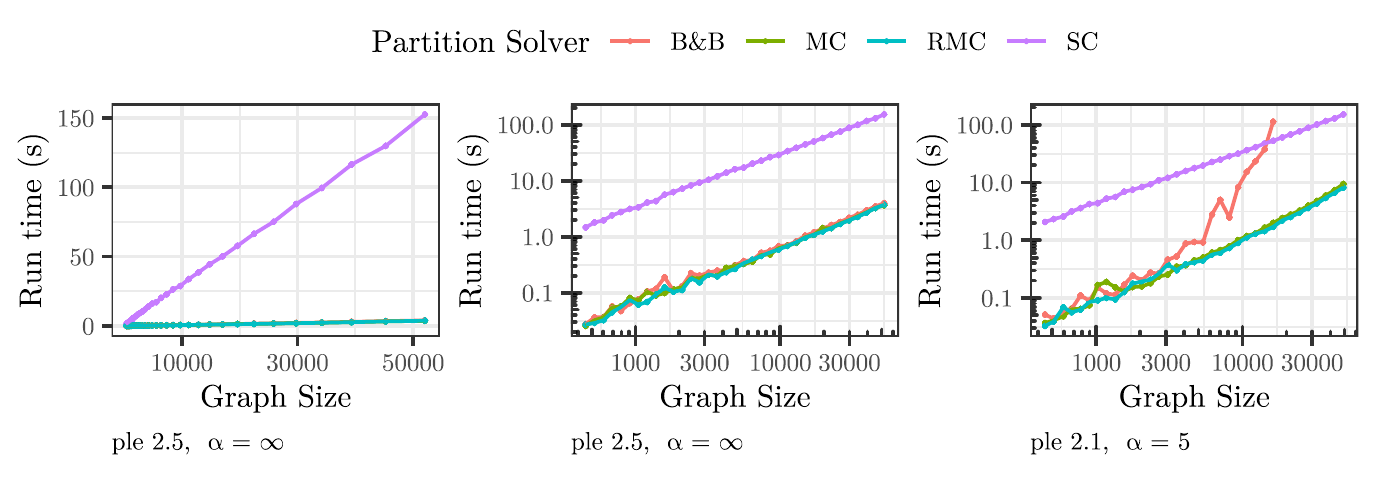}
  \caption{Scaling behavior of \textsc{Clique Partition}
    algorithms on GIRGs with different parameters.}
  \label{fig:scaling_comparison}
\end{figure}

Next, we consider the scaling behavior of our solvers.  For
this, we generated GIRGs of varying sizes up to around
\SI{50}{k} vertices for various parameters.  As in
\cref{sec:perf-comp}, we did not evaluate the weighted set
cover heuristic.  \cref{fig:scaling_comparison} shows the
run times for GIRGs with two different parameter
configurations.  On the networks with high locality
($α = \infty$), all four approaches seem to have close to
linear run time, despite enumerating all maximal cliques
present in each bag of the tree decomposition.  However, as
we decrease the locality ($α = 5$) the performance of the
branch-and-bound solver deteriorates while the greedy
heuristics and especially the set cover heuristic are only
slightly affected.  In the logarithmic plot, we observe
clearly super-polynomial scaling behavior only for the
branch-and-bound solver.  Further experiments on a larger
grid of parameter settings confirm the above findings.

\subsection{Branch-and-bound: lower bounds and reduction
  rule}
\label{sec:eval:solvers:reduction_rules}

\begin{figure}[]
  \centering
  \includegraphics[]{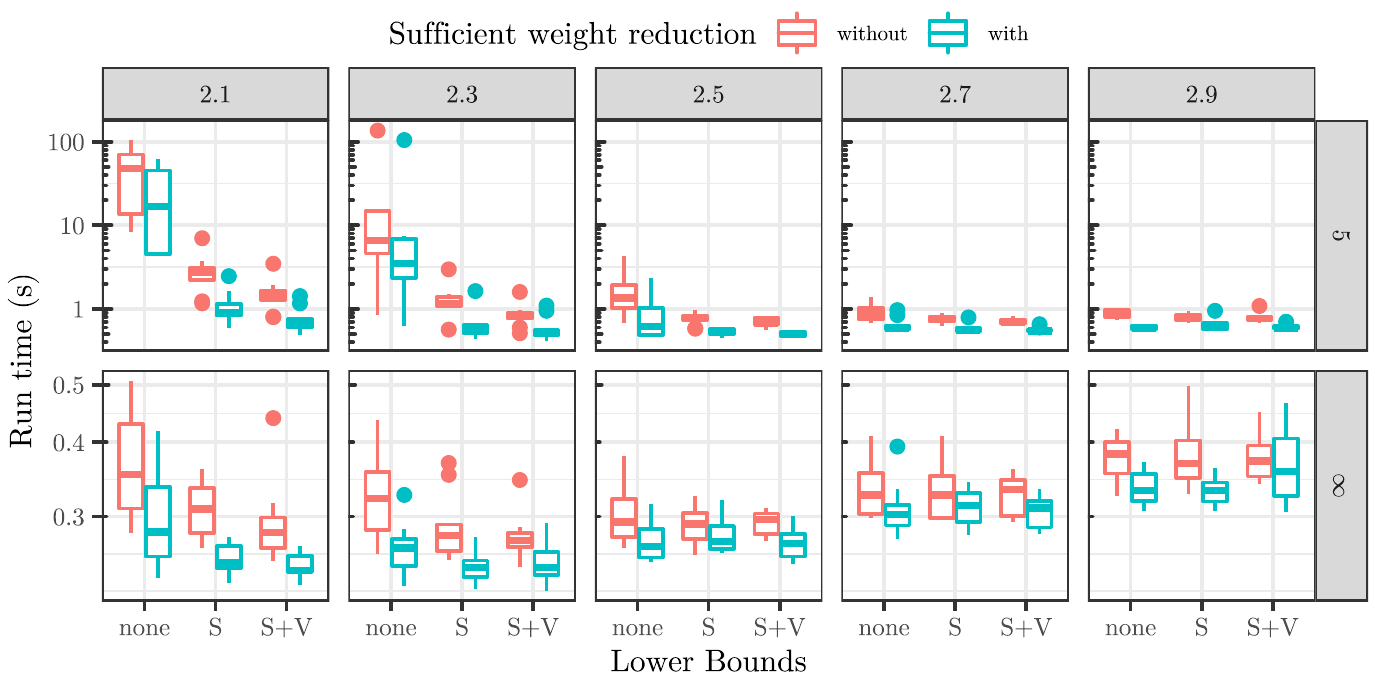}
  \caption{Run time of different variants of the
    branch-and-bound solver on GIRGs with \SI{5}{k}
    vertices and different values for the power-law
    exponent (left to right) and $α$ (top / bottom).}
  \label{fig:compare_redrules_box_5000}
\end{figure}

In the following, we evaluate the effectiveness of the
lower bounds and the reduction rule in speeding up our
branch-and-bound solver.  For this, we use the dataset of
generated networks.  As the performance without lower
bounds does not allow for the timely evaluation on larger
instances, we consider only graphs generated with \SI{5}{k}
vertices.  \cref{fig:compare_redrules_box_5000} shows the
average run time without lower bounds (none), with only the
size lower bound (S) and with the valuable sequence bound
in addition to the size bound (S+V) as well as with and
without the sufficient weight reduction for different
network parameters.  We only show $α ∈ \{5, ∞\}$, as for
lower values the variant without lower bounds did not
finish within the time limit.

We find that especially for smaller power-law exponents,
the lower bounds bring large speed-ups of up to multiple
orders of magnitude.  The additional gain of using the size
lower bound is much larger than that of the much simpler
valuable sequence bound.  The sufficient weight reduction
yields similar speed-ups for all settings.  Overall, we
conclude that the lower bounds are effective in speeding up
the branch-and-bound solver.  On a more general note, it is
striking how strongly all variants of the solver are
affected by lower values of $α$, especially also below the
values shown in \cref{fig:compare_redrules_box_5000}.  In
additional experiments we found that the above observations
also apply to the remainder of the dataset, even though for
\SI{50}{k} vertices the time limit is reached even more
frequently.

\subsection{Impact of network properties}
\label{sec:eval:impact-netw-prop}

\begin{figure}
  \centering
  \includegraphics[]{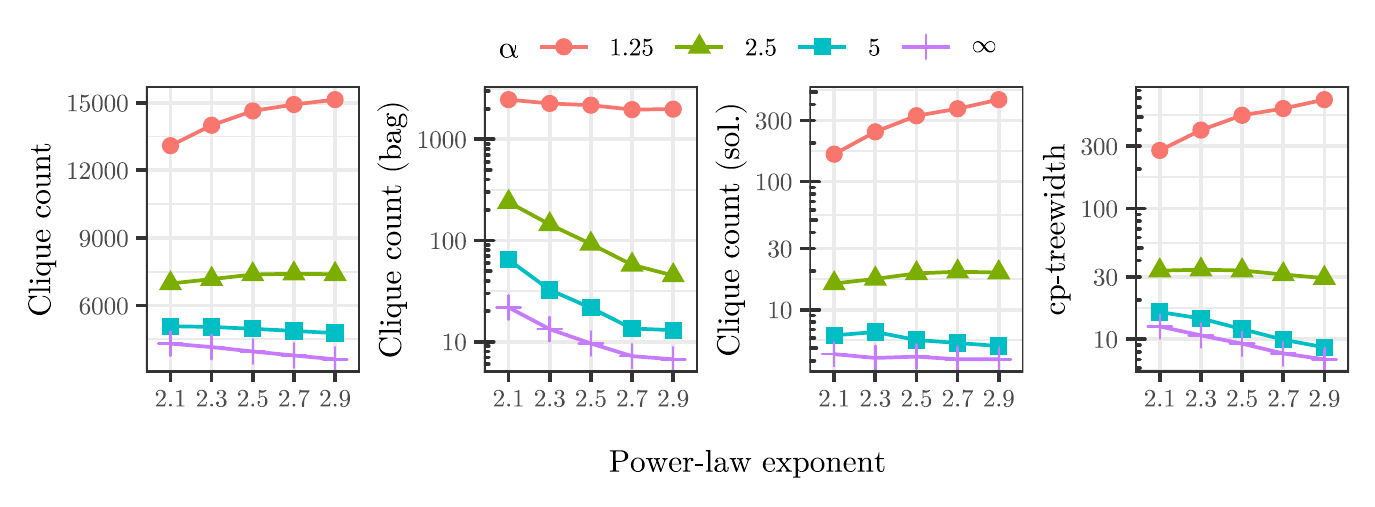}
  \caption{Total clique count (number of maximal cliques)
    per network, and highest clique count in any bag of a
    greedy tree decomposition as well in the lowest weight
    clique partition of any bag, and clique-partitioned
    treewidth (lowest upper bound) of the entire instance
    on GIRGs with $\SI{5}{k}$ vertices and varying
    parameters.  Note the logarithmic y-axes on all except
    the first plot.}
  \label{fig:girgs:clique_properties}
\end{figure}

At multiple points throughout the last sections, we found
that, especially for the branch-and-bound algorithm, the
performance strongly depended on the parameter $α$
controlling the locality of the generated networks.

In order to better understand this, we study the structure
of cliques in the generated networks depending on their
parameters.  Specifically, for each network we count the
number of maximal cliques in the graph, we count the number
of maximal cliques in each bag of the tree decomposition
and take the maximum, we count the number of cliques used
per bag in the clique-partitioned tree decomposition and
take the maximum, and consider the width of the
clique-partitioned tree decomposition.  The
clique-partitioned tree decompositions are obtained using
the MC and MCR heuristic.
\cref{fig:girgs:clique_properties} shows these values for
GIRGs with varying power-law exponent and $α$.

We see that with decreasing values of $α$, all considered
measures increase.  However, while the total number of
maximal cliques in the network only increases by a factor
of roughly 4 to 10, the highest number of cliques
intersecting some bag of the tree decomposition as well as
the highest number of cliques in a lowest-weight clique
partition increase by multiple orders of magnitude.
Intuitively, this can be explained by cliques starting to
fray if the locality is too low.  This explains, why the
\textsc{Clique Partition} problem is harder on GIRGs
with lower values of $α$, which slows down the branch-and-bound
algorithm.  We also observe, that the obtained upper bounds
on the cp-treewidth are not much lower than the highest
number of cliques per bag of a solution, explaining the
good performance of the set cover heuristic.

\subsection{Clique-partitioned treewidth compared to
  traditional treewidth}
\label{sec:eval:param_value}

Here we consider the data set of real-world networks.  As
we have seen in Section~\ref{sec:perf-comp}, the maximal
clique and repeated maximal clique heuristics are efficient
and tend to perform well in terms of quality.  Thus, we use
these two heuristics to find an upper bound on the
clique-partitioned treewidth.

\begin{figure}[]
  \centering
  \begin{subfigure}[b]{0.48\textwidth}
    \includegraphics[]{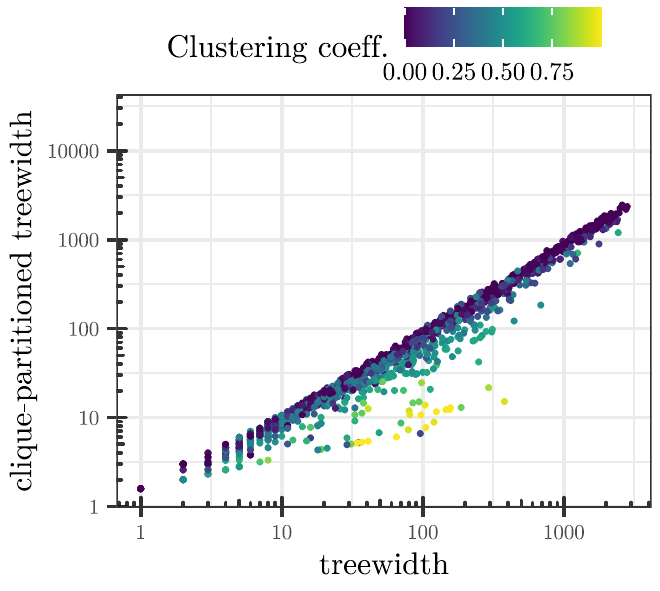}
    \caption{Dependency between clustering coefficient and
      heuristic upper bounds on clique-partitioned treewidth and
      treewidth.}
    \label{fig:rw_ub_wtw_clustering:scatter}
  \end{subfigure}
  ~ 
  \begin{subfigure}[b]{0.48\textwidth}
    \includegraphics[]{./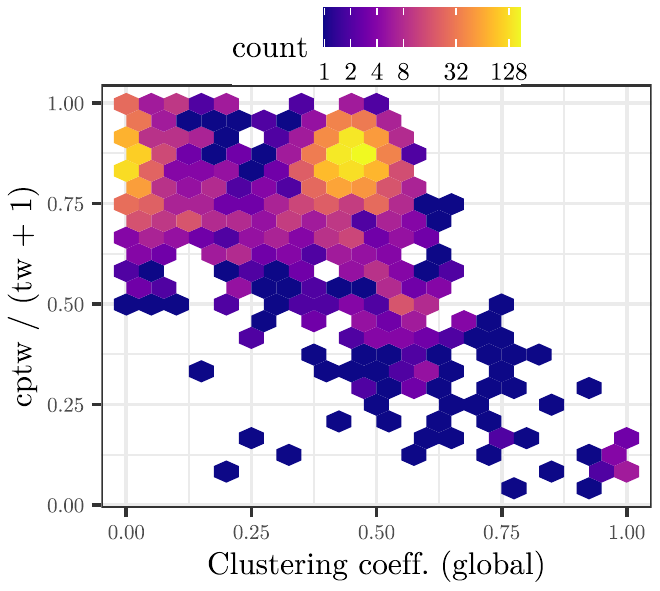}
    \caption{Dependency between clustering coefficient and
      relative difference between clique-partitioned treewidth and
      treewidth.}
    \label{fig:rw_clustering:density}
  \end{subfigure}
  \caption{Upper bounds for clique-partitioned treewidth on
    large real-world networks.
  }%
  \label{fig:rw_ub}
\end{figure}

In \cref{fig:rw_ub_wtw_clustering:scatter} we compare the
obtained upper bounds for the weighted treewidth and the
treewidth.  Even though the parameter does not decrease
much for the majority networks, there are some networks on
which substantial reductions are achieved.  This is
particularly true for networks with high clustering
coefficient, where for some instances our
clique-partitioned tree decomposition has width 10 while
the corresponding traditional tree decomposition has width
above 100.  This correspondence with the clustering
coefficient fits well to the observations in
\cref{sec:eval:impact-netw-prop}. For the networks for
which we do not yet see a big improvement, it would be
interesting to see whether adjusting the computation of the
initial tree decomposition can yield better bounds; see
also the discussion in \cref{sec:perf-comp}.

\bibliography{paper}

\appendix

\clearpage

\section{Limits of the set cover heuristics}
\label{sec:additional-stuff}

\begin{figure}[]
  \centering
  \begin{subfigure}[b]{0.48\textwidth}
  \centering
    \includegraphics[scale=0.9]{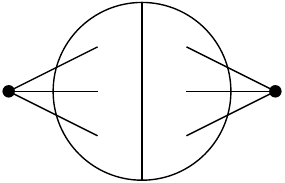}
    \caption{Unweighted set cover.}
    \label{fig:set_cover_non_opt_unweighted}
  \end{subfigure}
  ~ 
  \begin{subfigure}[b]{0.48\textwidth}
  \centering
    \includegraphics[scale=0.8]{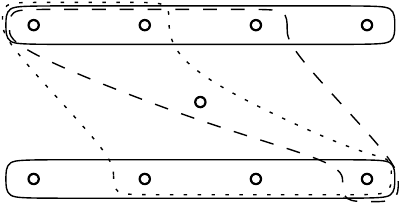}
    \caption{Weighted set cover.}
    \label{fig:set_cover_non_opt_weighted}
  \end{subfigure}
  \caption{Counter-examples for the optimality of the set
    cover heuristics.}%
  \label{fig:set_cover_non_opt}
\end{figure}

We want to briefly discuss why the set cover solutions are
not always optimal clique partitions. First, we give an
instance on which the unweighted set cover approach fails.

\begin{observation}
  There are graphs on which the minimum size clique cover
  cannot give an optimal clique partition.
\end{observation}
\begin{proof}
  Consider a clique on $k$ vertices for even $k$ where half
  of the vertices are connected to one additional vertex
  and the other half to another additional vertex, as
  illustrated in \cref{fig:set_cover_non_opt_unweighted}.  Then, for
  $k \ge 6$ the partition into three cliques of sizes 1, 1,
  and $k$ has lower weight than the partition into two
  cliques of size $\frac{k}{2}+1$, which corresponds to the
  optimal solution of the set cover instance.
\end{proof}

For the minimum weight set cover, we can use the fact that
the weights of the set cover instance correspond to the
size of the whole clique and do not reflect the potential
overlap between multiple selected cliques.

\begin{observation}
  There are graphs on which the minimum size clique cover
  cannot give an optimal clique partition.
\end{observation}
\begin{proof}
  For the weighted approach, consider the instance depicted
  in \cref{fig:set_cover_non_opt_weighted}.  The small
  circles represent the vertices of a graph and the regions
  mark maximal cliques.  The optimal clique partitioning
  uses cliques of sizes $6$, $2$, and $1$ (the dotted
  clique plus the remainders of the two solid cliques).  In
  the set cover instance these cliques have (partly
  overlapping) sizes $6$, $4$, and $4$, which is more
  expensive than the set cover solution with sizes $5$,
  $4$, and $4$ (using the dashed clique instead of the
  dotted one), which results in a solution with sizes
  $5, 3, 1$.
\end{proof}

The above problem could be avoided by extending the set
cover instance to also include all non-maximal subsets of
each clique that can be obtained by removing vertices
shared with some subset of overlapping cliques.  This
would, however, lead to an exponential blowup of the
set cover instances, which is not feasible even with
state of the art solvers.

\end{document}